\def\BibTeX{{\rm B\kern-.05em{\sc i\kern-.025em b}\kern-.08em
    T\kern-.1667em\lower.7ex\hbox{E}\kern-.125emX}}
\newcommand{\Tr}{\operatorname{Tr}}
\newcommand{\cX}{\mathcal{X}}
\newcommand{\KL}{D_{\mathrm{KL}}}
\newcommand{\Dens}{\mathcal{S}}
\newcommand{\ip}[2]{\langle #1, #2 \rangle}
\newcommand{\D}{\mathrm{D}}
\theoremstyle{plain}
\newtheorem{theorem}{Theorem}
\newtheorem{lemma}{Lemma}
\newtheorem{corollary}{Corollary}
\newtheorem{assumption}{Assumption}
\theoremstyle{definition}
\newtheorem{definition}{Definition}
\newtheorem{remark}{Remark}
\begin{document}

\title{A Mirror-Descent Algorithm for Computing the Petz--R\'enyi Capacity of Classical-Quantum Channels}

\author{\IEEEauthorblockN{Yu-Hong Lai and Hao-Chung Cheng}
\IEEEauthorblockA{\textit{Department of Electrical Engineering},
\textit{National Taiwan University},
Taipei 10617, Taiwan \\
\textit{Physics/Math Division, National Center for Theoretical Sciences}, Taipei 10617, Taiwan \\
\textit{Hon Hai (Foxconn) Quantum Computing Center}, New Taipei City 236, Taiwan
}
}

\maketitle

\begin{abstract}
We study the computation of the $\alpha$-R\'enyi capacity of a classical-quantum (c-q) channel for $\alpha\in(0,1)$. 
We propose an exponentiated-gradient (mirror descent) iteration that generalizes the Blahut--Arimoto algorithm. Our analysis establishes relative smoothness with respect to the entropy geometry, guaranteeing a global sublinear convergence of the objective values. 
Furthermore, under a natural tangent-space nondegeneracy condition (and a mild spectral lower bound in one regime), we prove local linear (geometric) convergence in Kullback--Leibler divergence on a truncated probability simplex, with an explicit contraction factor once the local curvature constants are bounded.
\end{abstract}

\begin{IEEEkeywords}
Classical-quantum channel, R\'enyi capacity, Mirror descent, Convex optimization, Relative smoothness.
\end{IEEEkeywords}

\section{Introduction}
The Shannon capacity of a channel characterizes its fundamental capability of information transmission.
This quantity is given by maximizing the mutual information between the channel input and output, over all possible input distributions.
Although Shannon's capacity is itself an optimization problem, Blahut--Arimoto (BA) algorithm was devised to compute it in an iterative manner \cite{blahut1972computation, arimoto1972algorithm}.
Variants of the BA algorithms were also proposed and analyzed; see, e.g., \cite{kamatsuka2024new} and references therein.
In particular, for classical order-$\alpha$ capacities (e.g.\ Sibson/Arimoto capacities), BA-type alternating optimization schemes admit global convergence guarantees; \cite{kamatsuka2024new} shows equivalence among several such iterations under appropriate conditions on their initial distributions and proves global convergence to the optimum.
It was also generalized, in a non-trivial way, to compute the capacity of classical-quantum (c-q) channels \cite{li2020cqba, Ramakrishnan_2020, He_2024}.

In this work, we study computation of a one-parameter family, called Petz--R\'enyi capacity \cite{arimoto1977orderalpha, nakiboglu_2019} for c-q channels.
Such a quantity recently became important because of its role in characterizing the error exponent (or called reliability function) for c-q channels \cite{BH98, Hol00, Dal13, Ren25, LY25, preparation}.
For an input distribution $p$ on a finite alphabet $\mathcal{X}$ and a c-q channel $W:\cX \ni x \mapsto W_x \in \Dens(\mathbb{C}^{d}) \triangleq \left\{\rho: \rho \in \mathbb{C}^{d\times d}, \rho \geq 0, \Tr\left[\rho\right]=1\right\}$,
we define the (order-$\alpha$) Petz--R\'enyi information for $\alpha \in (0,1)$ as
\begin{align}
I_\alpha(p;W)
&\triangleq \inf_{\sigma \in \Dens(\mathbb{C}^{d})} \frac{1}{\alpha-1} \log \sum_{x\in\mathcal{X}} p_x \, e^{(\alpha-1)D_{\alpha}(W_x \Vert \sigma)}
\\
&=\frac{\alpha}{\alpha-1}\,
\!\log \Tr\!\left[\left(\sum_{x\in\cX} p_x\, W_x^\alpha\right)^{\!\frac{1}{\alpha}}\right],
\label{eq:intro_Ialpha}
\end{align}
where $D_{\alpha}(\rho\Vert\sigma) \triangleq \frac{1}{\alpha-1} \log \Tr\left[\rho^{\alpha}\sigma^{1-\alpha}\right]$, $\alpha \in (0,1)$, is the Petz--R\'enyi divergence \cite{petz1986quasi}, and \eqref{eq:intro_Ialpha} follows from quantum Sibson's identity \cite{KW09, SW12, HT14, CGH18}.
The Petz--R\'enyi capacity is then defined by optimizing $I_\alpha(p;W)$ over the probability simplex,
\begin{align}
C_\alpha(W)\triangleq \max_{p\in\Delta(\cX)} I_\alpha(p;W), \quad\alpha \in (0,1).
\end{align}

\noindent\textbf{Contributions.}
We study the computation of the Petz--R\'enyi capacity of  classical--quantum channels and cast it as minimizing a convex trace-power objective over the probability simplex. Using the negative-entropy geometry, we derive an entropic mirror-descent method with a closed-form exponentiated-gradient update, resulting in a Blahut--Arimoto--type iteration with a simple constant-stepsize implementation. Our analysis develops sharp spectral divided-difference bounds for the Hessian of the trace-power objective, which yields relative smoothness with respect to entropy and a global non-asymptotic sublinear decrease in objective values. 
Finally, on a truncated simplex and under a natural tangent-space nondegeneracy condition, we establish a matching local curvature lower bound and obtain local linear (geometric) convergence in the Kullback--Leibler (KL) divergence.

The Petz--R\'enyi capacity admits an equivalent form of maximizing the so-called Petz--Augustin information \cite{nakiboglu_augustin_2018, MO17, CGH18}:
\begin{align}
C_{\alpha}(W)
&= \max_{p\in\Delta(\mathcal{X})} I_{\alpha}^{\texttt{Aug}}(p;W),
\\
I_{\alpha}^{\texttt{Aug}}(p;W)
&\triangleq \inf_{\sigma \in \mathcal{S}(\mathbb{C}^d)} \sum\nolimits_{x\in\mathcal{X}} p_x \cdot D_{\alpha}(W_x\Vert \sigma). \label{eq:Augustin}
\end{align}
A recent work \cite{chu2025petzaugustin} develops a fixed-point method for computing the minimizer in \eqref{eq:Augustin}  and applies it to Petz--R\'enyi capacity for $\alpha\in(1/2,1)$ with a linear rate.
Our approach of maximizing the Petz--R\'enyi information in \eqref{eq:intro_Ialpha} avoids such an inner minimization.

\medskip
This paper is organized as follows.
We introduce our problem setup in Section~\ref{sec:setup} and the proposed algorithm in Section~\ref{sec:algorithm}.
In Section~\ref{sec:analysis_hessian}, we develop exact Hessian expressions and spectral kernel bounds that serve as the main technical tool.
Section~\ref{sec:smoothness} uses these bounds to prove entropy-relative smoothness and obtain global sublinear convergence rates.
Section~\ref{sec:linear} establishes relative strong convexity on a truncated simplex under a nondegeneracy condition, yielding local linear convergence in KL divergence.
We report numerical results in Section~\ref{sec:num}.

\smallskip
\noindent\textbf{Note}.
An independent and concurrent work \cite{chu2026capacity} also studies computation of the Petz--R\'enyi capacity for $\alpha \in [1/2,1]$. We note that their algorithm and analysis are different from ours.

\section{Setup and Convexity}
\label{sec:setup}

Fix $\alpha\in(0,1)$ and set $\beta\triangleq 1/\alpha>1$. For $p\in\Delta(\cX)$, define the operators $A_x \triangleq W_x^{\alpha}$ and
\begin{equation}
\label{eq:def_M_S}
M(p)\triangleq \sum_{x\in\cX} p_x A_x,\qquad
S(p)\triangleq \Tr\left[M(p)^\beta\right].
\end{equation}
We work on the simplex $\Delta = \{p \ge 0, \sum_x p_x = 1\}$ and its relative interior $\Delta^\circ$. Directions $h$ in the tangent space satisfy $\sum_x h_x = 0$, and the induced perturbation is $H \triangleq \sum_x h_x A_x$.

{\noindent\textbf{Standard assumption.}}
Throughout the paper we assume
\begin{equation}
\label{eq:standing_fullsupport}
\text{(non-singularity)}: \; \sum_{x\in\cX} W_x \succ 0,
\end{equation}
equivalently $\sum_{x\in\cX} A_x \succ 0$ since $W_x$ and $W_x^\alpha$ have the same support.
This assumption is natural; otherwise one may restrict the output space to the union of the supports of $\{W_x\}_{x\in\cX}$.

\begin{assumption}[Truncated simplex (only when needed)]
\label{assump:truncate}
Fix $\delta\in(0,1/|\cX|]$ and define $\Delta_\delta\triangleq\{p\in\Delta:\,p_x\ge\delta\}$.
Whenever we invoke $\delta$-dependent constants (e.g.\ for $1<\beta<2$), we assume the iterates satisfy
$p^t\in\Delta_\delta$ for all $t$.
\end{assumption}

\begin{remark}[Regularity and the role of $\delta$]
\label{rem:regularity}
For $\beta>1$, the map $M\mapsto \Tr\left(M^\beta\right)$ is convex on the PSD cone, and standard Fr\'echet derivative formulas
apply on $\{M\succ 0\}$. When $1<\beta<2$, powers such as $M^{\beta-2}$ involve negative exponents and are singular at eigenvalue $0$.
Under~\eqref{eq:standing_fullsupport} and Assumption~\ref{assump:truncate}, we have
\[
M(p)\succeq \delta\sum_{x\in\cX}A_x \succ 0 \quad \forall\,p\in\Delta_\delta,
\]
so $\lambda_{\min}(M(p))$ is uniformly lower bounded on $\Delta_\delta$ and these expressions are well-defined.
We denote
\begin{equation}
\label{eq:def_mdelta}
m_\delta \triangleq \delta\,\lambda_{\min}\!\left(\sum_{x\in\cX}A_x\right)>0.
\end{equation}
\end{remark}

\subsection{Convex Reformulation}
Since $\frac{\alpha}{\alpha-1} < 0$ and $\log(\cdot)$ is strictly increasing, maximizing \eqref{eq:intro_Ialpha} is equivalent to minimizing $S(p)$.

\begin{lemma}[Convexity {\cite[Theorem 2.10]{Car09}}]
\label{lem:convexS}
For $\beta\ge 1$, the map $p\mapsto S(p)=\Tr\left[M(p)^\beta\right]$ is convex on $\Delta$.
\end{lemma}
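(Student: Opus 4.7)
\medskip
\noindent\textbf{Proof proposal.} The plan is to factor the argument through two elementary facts. First, $p\mapsto M(p)=\sum_{x\in\cX} p_x A_x$ is an affine map from $\Delta$ into the PSD cone, and precomposition with an affine map preserves convexity. Hence it suffices to establish that the matrix trace-power functional $F(M)\triangleq \Tr[M^\beta]$ is convex on $\{M\succeq 0\}$ whenever $\beta\ge 1$.

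For this step I would invoke the general principle that $A\mapsto \Tr[f(A)]$ (defined via functional calculus) is convex on Hermitian PSD matrices whenever $f:[0,\infty)\to\mathbb{R}$ is convex, specialized to $f(t)=t^\beta$, which is convex on $[0,\infty)$ precisely when $\beta\ge 1$. The standard derivation is: diagonalize $C=tA+(1-t)B$ in its own eigenbasis $\{e_i\}$ so that the eigenvalues satisfy $c_i=\langle e_i|C|e_i\rangle=t\langle e_i|A|e_i\rangle+(1-t)\langle e_i|B|e_i\rangle$; apply scalar convexity of $f$ componentwise to the right-hand side; sum over $i$; and finally dominate $\sum_i f(\langle e_i|A|e_i\rangle)\le \Tr[f(A)]$ (with the analogue for $B$) by Peierls' inequality.

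The main technical ingredient is thus Peierls' inequality, which is itself an immediate consequence of scalar Jensen applied to the squared overlaps between $\{e_i\}$ and the eigenvectors of $A$. Once that is granted, assembling the chain of inequalities is mechanical. An alternative route would be to compute $\nabla^2 S(p)$ directly and verify positive semidefiniteness via the spectral divided-difference representation developed later in Section~\ref{sec:analysis_hessian}; this would be more laborious but self-contained and would simultaneously supply the quantitative Hessian bounds needed for the smoothness analysis. Since Carlen's survey already records the convexity as [Car09, Theorem 2.10], a direct citation is the most economical option here, with the Hessian-based reasoning deferred to the sections where the explicit spectral bounds are actually needed.
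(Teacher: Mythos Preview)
Your proposal is correct and matches the paper's treatment: the paper does not give a proof at all for this lemma but simply records it with the citation \cite[Theorem~2.10]{Car09}, which is precisely the route you recommend. The Peierls-inequality sketch you supply is a standard and valid justification of the cited result, and your observation that the Hessian computation in Section~\ref{sec:analysis_hessian} (Lemma~\ref{lem:hess_exact}, where each $g_{ij}\ge 0$) furnishes an independent self-contained verification is also accurate.
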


\begin{remark}[Capacity via $S$]
\label{rem:cap_via_S}
The minimizers satisfy $\arg\max_{p\in\Delta} I_\alpha(p;W)=\arg\min_{p\in\Delta} S(p)$, and
\[
C_\alpha(W)=\frac{\alpha}{\alpha-1}\log\Big(\min_{p\in\Delta} S(p)\Big).
\]
\end{remark}

\section{The Algorithm}
\label{sec:algorithm}

We utilize mirror descent with the negative entropy mirror map
$\omega(p)\triangleq \sum_x p_x\log p_x$, which induces the
Kullback--Leibler divergence $\KL(p\|r)$.

\subsection{Mirror-descent update and closed form}
The gradient of $S(p)$ is
\begin{equation}
\label{eq:gradS}
[\nabla S(p)]_x
= \beta\,\Tr\!\left[M(p)^{\beta-1}A_x\right],
\qquad x\in\cX.
\end{equation}
Given a stepsize $\eta>0$, the mirror descent update is
\begin{equation}
\label{eq:MD_step}
p^{t+1}\in\arg\min_{p\in\Delta}
\Big\{\ip{\nabla S(p^t)}{p}+\tfrac{1}{\eta}\KL(p\|p^t)\Big\}.
\end{equation}

\subsection{Truncated-simplex safeguard (iterates stay in $\Delta_\delta$)}
To ensure the convergence analysis is carried out on a domain where iterates do not drift to the boundary,
we enforce a universal probability floor $\delta\in(0,1/|\cX|]$ by a simple ``smoothing'' post-processing:
after computing the exponentiated-gradient iterate $\tilde p^{t+1}\in\Delta$, set
\begin{equation}
\label{eq:smoothing_step}
p^{t+1} \triangleq (1-|\cX|\delta)\,\tilde p^{t+1} + \delta\,\mathbf{1},
\end{equation}
where $\mathbf{1}$ is the all-ones vector. Then $p^{t}\in\Delta_\delta$ for all $t$.
In other words, Assumption~\ref{assump:truncate} holds along the iterates, and under \eqref{eq:standing_fullsupport}
this implies $M(p^t)\succ 0$ for all $t$.

\subsection{Constant stepsize selection}
In this work we use a \emph{constant} stepsize $\eta=1/L$, where $L$ is any
(relative) smoothness constant of $S$ with respect to $\omega$.
Section~\ref{sec:smoothness} provides explicit bounds (Theorem~\ref{thm:rel_smooth}).
In particular:
\begin{equation}
\label{eq:L_const_qge2}
L = 2c_\beta\,\beta(\beta-1),
\qquad \beta\ge 2,
\end{equation}
where $c_\beta$ is defined in \eqref{eq:cq_def} later.
Thus for $\alpha\le \tfrac12$ the stepsize $\eta=1/L$ is explicit and global.
For $1<\beta<2$, Theorem~\ref{thm:rel_smooth} yields
$L(p)=2c_\beta\,\beta(\beta-1)\lambda_{\min}(M(p))^{\beta-2}$; hence on $\Delta_\delta$
under Assumption~\ref{assump:truncate} and \eqref{eq:standing_fullsupport} one has $\lambda_{\min}(M(p))\ge m_\delta$ as in \eqref{eq:def_mdelta},
and may take
\begin{equation}
\label{eq:L_const_qlt2}
L_\delta = 2c_\beta\,\beta(\beta-1)\,m_\delta^{\,\beta-2},
\qquad 1<\beta<2,
\end{equation}
again yielding a constant (region-dependent) stepsize $\eta=1/L_\delta$.

\begin{algorithm}[!ht]
\caption{Mirror Descent for $\alpha$-R\'enyi Capacity (with $\Delta_\delta$ safeguard)}
\label{alg:renyi_ba}
\begin{algorithmic}[1]
\State \textbf{Input:} Classical-quantum channel $W$ with $\sum_{x\in\mathcal{X}} W_x \succ 0$, $\alpha \in (0,1)$, stepsize $\eta$ (constant, e.g.,~$\eta=1/L$), floor $\delta\in(0,1/|\cX|]$.
\State \textbf{Set:} $\beta = 1/\alpha$, $A_x = W_x^\alpha$.
\State \textbf{Initialize:} $p^0 = \mathrm{uniform}(\cX)$ \hfill (so $p^0\in\Delta_\delta$ if $\delta\le 1/|\cX|$)
\State \textbf{For} $t = 0,1,2,\dots$ until convergence:
\State \quad $M_t = \sum_{x\in\cX} p_x^t A_x$.
\State \quad $v_x^t = \beta\,\Tr\left[M_t^{\beta-1} A_x\right]$ for all $x$.
\State \quad Compute the exponentiated-gradient iterate $\tilde p^{t+1}$:
\[
\tilde p_x^{t+1}
=\frac{p_x^t \exp(-\eta v_x^t)}{\sum_{y\in\cX} p_y^t \exp(-\eta v_y^t)}.
\]
\State \quad \textbf{Safeguard:} $p^{t+1} \leftarrow (1-|\cX|\delta)\,\tilde p^{t+1} + \delta\,\mathbf{1}$ \hfill (so $p^{t+1}\in\Delta_\delta$)
\State \textbf{Output:} $C_\alpha(W) \approx \frac{\alpha}{\alpha-1}\log S(p^t)$.
\end{algorithmic}
\end{algorithm}

\section{Analysis: Hessian and Kernel Bounds}
\label{sec:analysis_hessian}

We derive bounds on the Hessian to establish relative smoothness and relative strong convexity.

\subsection{Exact Directional Hessian}
When $1<\beta<2$ we will work on $\Delta_\delta$ (Assumption~\ref{assump:truncate}) so that $M(p)\succ 0$.
Let $M(p)=\sum_i \lambda_i P_i$ be the spectral decomposition.
The directional derivative is $\D S[p](h) = \beta \Tr\left[M^{\beta-1}H\right]$.

\begin{lemma}[Directional Hessian]
\label{lem:hess_exact}
For $\beta > 1$ and $M(p)\succ 0$,
\begin{equation}
\label{eq:hess_exact}
\D^2 S[p][h,h]
\!=\! \beta\sum_{i,j} g_{ij}\,\Tr\left[P_i H P_j H\right]
\!=\! \beta\sum_{i,j} g_{ij}\,\|P_i H P_j\|_F^2,
\end{equation}
where the divided-difference coefficients are
\begin{equation}
\label{eq:gij_def}
g_{ij}\triangleq
\begin{cases}
\dfrac{\lambda_i^{\beta-1}-\lambda_j^{\beta-1}}{\lambda_i-\lambda_j}, & i\neq j,\\[0.6ex]
(\beta-1)\lambda_i^{\beta-2}, & i=j.
\end{cases}
\end{equation}
In particular, $g_{ij}\ge 0$ and hence $\D^2S[p](h,h)\ge 0$.
\end{lemma}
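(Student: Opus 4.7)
The plan is to obtain the second directional derivative by differentiating the first-order expression $\D S[p](h) = \beta \Tr[M(p)^{\beta-1} H]$ once more in direction $h$, using the Daleckii--Krein (divided-difference) formula for the Fr\'echet derivative of a matrix function. Since $H = \sum_x h_x A_x$ is independent of $p$, the product rule kills one term and only the derivative of $M(p)^{\beta-1}$ contributes.

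Setting $f(\lambda) = \lambda^{\beta-1}$ and using $M(p) = \sum_i \lambda_i P_i$, the Daleckii--Krein theorem gives $\D [f(M)][p](h) = \sum_{i,j} f^{[1]}(\lambda_i,\lambda_j)\, P_i H P_j$, where $f^{[1]}$ is the first divided difference. A direct evaluation (with l'H\^opital in the coincident case $\lambda_i=\lambda_j$) identifies $f^{[1]}(\lambda_i,\lambda_j)$ with the coefficient $g_{ij}$ of \eqref{eq:gij_def}. The standing hypothesis $M(p)\succ 0$ together with $\beta>1$ ensures $f$ is $C^1$ on a neighborhood of the spectrum, so the formula applies; this is exactly where Assumption~\ref{assump:truncate} is invoked for $1<\beta<2$. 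Substituting and using linearity and cyclicity of the trace yields the first identity $\D^2 S[p][h,h] = \beta \sum_{i,j} g_{ij} \Tr[P_i H P_j H]$.

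For the Frobenius-norm form, I would use that $\{P_i\}$ are self-adjoint orthogonal projectors, $H$ is Hermitian (a real combination of the Hermitian $A_x$), and $P_i P_j = \delta_{ij} P_i$. Then, by cyclicity and the adjoint identity $(P_i H P_j)^* = P_j H P_i$, each summand simplifies as $\Tr[P_i H P_j H] = \Tr[(P_i H P_j)(P_j H P_i)] = \Tr[(P_i H P_j)(P_i H P_j)^*] = \|P_i H P_j\|_F^2$. Non-negativity of $g_{ij}$ is then immediate: on $(0,\infty)$ the map $\lambda\mapsto\lambda^{\beta-1}$ is strictly increasing for $\beta>1$, hence the off-diagonal divided differences are non-negative, and the diagonal term $(\beta-1)\lambda_i^{\beta-2}$ is non-negative as well. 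Combining these gives $\D^2 S[p][h,h]\ge 0$ termwise.

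The main technical subtlety, though mild, is the applicability of the Daleckii--Krein formula in the regime $1<\beta<2$, where $f'(\lambda)=(\beta-1)\lambda^{\beta-2}$ blows up as $\lambda\to 0^+$. The strict positivity of $M(p)$ supplied by \eqref{eq:standing_fullsupport} together with Assumption~\ref{assump:truncate} is precisely what is needed to stay uniformly bounded away from this singularity; once this is in hand, the rest is routine spectral calculus, and no further care is needed to pass through the coincident-eigenvalue case since the divided difference extends continuously there.
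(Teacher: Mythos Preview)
Your proposal is correct and follows exactly the paper's approach: invoke the Daleckii--Krein formula for the Fr\'echet derivative of $M\mapsto M^{\beta-1}$ to obtain the divided-difference expansion, then rewrite each trace term as a Frobenius norm using the projector and Hermiticity identities. The paper's proof is a one-line citation of Daleckii--Krein plus the identity $\Tr[P_iHP_jH]=\|P_iHP_j\|_F^2$; you have simply unpacked the same argument in more detail, including the monotonicity observation for $g_{ij}\ge 0$ and the remark on why $M(p)\succ 0$ is needed when $1<\beta<2$.
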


\begin{proof}
This follows from the Daleckii--Krein formula \cite[Thm.~3.25]{HP14} for the Fr\'echet derivative 
and $\Tr\left[P_i H P_j H\right] = \|P_iHP_j\|_F^2$ with Frobenius norm $\|X\|_F\triangleq \sqrt{\Tr[X^\dagger X] }$.
\end{proof}

\subsection{Kernel Bounds}

\begin{lemma}[Integral Representation]
\label{lem:integral_rep}
For $\beta>1$ and $a,b>0$:
\begin{equation}
\label{eq:integral_rep}
\frac{a^{\beta-1}-b^{\beta-1}}{a-b}
= (\beta-1)\int_0^1\big(\theta a+(1-\theta)b\big)^{\beta-2}\,d\theta,
\end{equation}
with the natural continuous extension $(\beta-1)a^{\beta-2}$ when $a=b$.
\end{lemma}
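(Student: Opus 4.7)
The plan is to reduce the identity to the fundamental theorem of calculus applied to $f(t)\triangleq t^{\beta-1}$. Since $\beta>1$, this function lies in $C^1((0,\infty))$ with $f'(t)=(\beta-1)\,t^{\beta-2}$. I would parameterize the segment from $b$ to $a$ by $t(\theta)=\theta a+(1-\theta)b$ for $\theta\in[0,1]$; the positivity assumption $a,b>0$ ensures $t(\theta)\ge \min(a,b)>0$ throughout, so the segment avoids the singularity of $t^{\beta-2}$ at $0$ that arises when $\beta\in(1,2)$. Consequently the right-hand integrand is continuous on a compact interval bounded away from zero and hence integrable.

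Next I would combine the chain rule with the fundamental theorem of calculus to write
\[
a^{\beta-1}-b^{\beta-1}=\int_0^1 \frac{d}{d\theta}\bigl(\theta a+(1-\theta)b\bigr)^{\beta-1}\,d\theta=(a-b)(\beta-1)\int_0^1 \bigl(\theta a+(1-\theta)b\bigr)^{\beta-2}\,d\theta.
\]
When $a\neq b$ the stated identity follows by dividing through by $a-b$. When $a=b$, the integrand is the constant $a^{\beta-2}$, so the right-hand side collapses to $(\beta-1)a^{\beta-2}$, which matches the continuous extension declared in \eqref{eq:gij_def} (and agrees with the L'H\^{o}pital limit of the divided difference on the left).

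There is no genuine obstacle here: the result is essentially a one-line consequence of the fundamental theorem of calculus once the segment is parameterized. The only care required is to justify integrability of the kernel in the regime $\beta\in(1,2)$, which the strict positivity of $a$ and $b$ supplies directly; this is precisely the same reason that Assumption~\ref{assump:truncate} is imposed downstream to keep $M(p)\succ 0$ when invoking Lemma~\ref{lem:hess_exact}.
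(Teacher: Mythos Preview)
Your proof is correct and is essentially the same as the paper's: define $\phi(t)=t^{\beta-1}$, apply the fundamental theorem of calculus to get $\phi(a)-\phi(b)=(\beta-1)\int_b^a t^{\beta-2}\,dt$, and reparameterize by $t=\theta a+(1-\theta)b$. Your additional remarks on integrability for $\beta\in(1,2)$ and the $a=b$ limit are accurate elaborations but do not change the argument.
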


\begin{proof}
Let $\phi(t)=t^{\beta-1}$. Then $\phi(a)-\phi(b) = \int_b^a \phi'(t)\,dt = (\beta-1)\int_b^a t^{\beta-2} dt$.
Then, substitute $t=\theta a + (1-\theta)b$.
\end{proof}

\begin{lemma}[Uniform Upper Bound]
\label{lem:kernel_upper}
Let $r \triangleq \beta-2$. For $(i,j)$ with eigenvalues $\lambda_i, \lambda_j > 0$,
\begin{equation}
\label{eq:gij_upper}
g_{ij}\le (\beta-1)c_\beta\big(\lambda_i^{r}+\lambda_j^{r}\big),
\end{equation}
where
\begin{equation}
\label{eq:cq_def}
c_\beta\triangleq
\begin{cases}
\frac12, & \beta\in(1,2]\cup[3,\infty),\\
2^{2-\beta}, & \beta\in(2,3).
\end{cases}
\end{equation}
\end{lemma}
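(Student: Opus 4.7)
The plan is to invoke the integral representation from Lemma~\ref{lem:integral_rep}, which rewrites
\[
g_{ij}=(\beta-1)\int_0^1 \bigl(\theta\lambda_i+(1-\theta)\lambda_j\bigr)^{r}\,d\theta,\qquad r \triangleq \beta-2,
\]
and then to bound the integrand by $c_\beta(\lambda_i^r+\lambda_j^r)$ via a case split on where $r$ sits relative to the unit interval $[0,1]$. The scalar function $s\mapsto s^r$ on $(0,\infty)$ is convex when $r\le 0$ or $r\ge 1$, and concave when $r\in(0,1)$, so the regimes require different arguments that correspond exactly to the two branches of~\eqref{eq:cq_def}.

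In the convex regimes $\beta\in(1,2]$ (giving $r\le 0$) and $\beta\ge 3$ (giving $r\ge 1$), I would apply the pointwise convexity inequality $(\theta\lambda_i+(1-\theta)\lambda_j)^r\le \theta\lambda_i^r+(1-\theta)\lambda_j^r$ and integrate in $\theta$, immediately obtaining
\[
\int_0^1 \bigl(\theta\lambda_i+(1-\theta)\lambda_j\bigr)^r\,d\theta \le \tfrac12\bigl(\lambda_i^r+\lambda_j^r\bigr),
\]
which matches $c_\beta=1/2$. Multiplying by $(\beta-1)$ yields the claimed bound on $g_{ij}$ in this regime.

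The main obstacle is the middle range $\beta\in(2,3)$, where $r\in(0,1)$ makes $s^r$ concave and reverses the direction of the pointwise inequality, so the direct one-shot estimate is unavailable. Here I would proceed in two steps. First, apply Jensen's inequality for the concave function $s^r$ to the linear interpolant $t(\theta)=\theta\lambda_i+(1-\theta)\lambda_j$:
\[
\int_0^1 t(\theta)^r\,d\theta \le \Bigl(\int_0^1 t(\theta)\,d\theta\Bigr)^{r}= \Bigl(\tfrac{\lambda_i+\lambda_j}{2}\Bigr)^{r}.
\]
Second, invoke the scalar subadditivity $(a+b)^r\le a^r+b^r$, valid for $a,b\ge 0$ and $r\in(0,1)$, to deduce $((\lambda_i+\lambda_j)/2)^r\le 2^{-r}(\lambda_i^r+\lambda_j^r)$. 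Chaining these two estimates delivers $c_\beta=2^{-r}=2^{2-\beta}$, the second branch of~\eqref{eq:cq_def}.

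The diagonal case $\lambda_i=\lambda_j$ is automatically covered because Lemma~\ref{lem:integral_rep} provides a continuous extension at $a=b$, and the target bound reduces to $(\beta-1)\lambda_i^{r}\le 2(\beta-1)c_\beta\lambda_i^r$, which holds in every regime since $c_\beta\ge 1/2$. Conceptually, the constant $2^{2-\beta}$ exactly captures the unavoidable loss when trading concavity for subadditivity on $(2,3)$, while outside that window the sharper $1/2$ comes out of one-shot convexity.
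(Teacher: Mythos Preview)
Your proposal is correct and follows essentially the same route as the paper: invoke the integral representation, use convexity of $t\mapsto t^r$ (equivalently, the right Hermite--Hadamard inequality) when $r\le 0$ or $r\ge 1$ to get $c_\beta=\tfrac12$, and in the concave regime $r\in(0,1)$ apply Jensen to reach $((\lambda_i+\lambda_j)/2)^r$ followed by subadditivity $(a+b)^r\le a^r+b^r$ to extract $c_\beta=2^{-r}=2^{2-\beta}$. The only cosmetic difference is that the paper names Hermite--Hadamard explicitly and phrases the concave Jensen step as $\int_0^1 f\le f(\tfrac12)$, whereas you integrate the pointwise convexity inequality and pull the power outside via $\int_0^1 t(\theta)^r\,d\theta\le(\int_0^1 t(\theta)\,d\theta)^r$; both yield the identical bound.
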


\begin{proof}
By Lemma~\ref{lem:integral_rep},
\[
g_{ij}=(\beta-1)\int_0^1(\theta\lambda_i+(1-\theta)\lambda_j)^r\,d\theta.
\]
If $\beta\in(1,2]\cup[3,\infty)$ then $r\in[-1,0]\cup[1,\infty)$ and $t\mapsto t^r$ is convex on $(0,\infty)$.
Hermite--Hadamard's inequality implies the integral is at most $\frac{\lambda_i^r+\lambda_j^r}{2}$, giving $c_\beta=\frac12$.
If $\beta\in(2,3)$ then $r\in(0,1)$ and $t\mapsto t^r$ is concave. Jensen's inequality gives
$\int_0^1 f \le f(\frac12)$ for concave $f$, hence
\begin{align*}
\int_0^1(\theta\lambda_i+(1-\theta)\lambda_j)^r\,d\theta
&\le \Big(\tfrac{\lambda_i+\lambda_j}{2}\Big)^r
=2^{-r}(\lambda_i+\lambda_j)^r \\
&\le 2^{-r}(\lambda_i^r+\lambda_j^r),
\end{align*}
using $(a+b)^r\le a^r+b^r$ for $r\in(0,1)$. Since $2^{-r}=2^{2-\beta}$, this yields \eqref{eq:gij_upper}.
\end{proof}

\begin{lemma}[Lower Bound]
\label{lem:kernel_lower}
For $(i,j)$ with eigenvalues $\lambda_i, \lambda_j > 0$,
\begin{equation}
\label{eq:gij_lower}
g_{ij}\ge(\beta-1)\times
\begin{cases}
\lambda_{\min}(M)^{\beta-2}, & \beta\ge 2,\\
\lambda_{\max}(M)^{\beta-2}, & 1<\beta\le 2.
\end{cases}
\end{equation}
\end{lemma}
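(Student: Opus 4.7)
The plan is to start directly from the integral representation of Lemma~\ref{lem:integral_rep}, which writes
\[
g_{ij}=(\beta-1)\int_0^1\bigl(\theta\lambda_i+(1-\theta)\lambda_j\bigr)^{\beta-2}\,d\theta,
\]
and then bound the integrand pointwise by exploiting the monotonicity of $t\mapsto t^{\beta-2}$ on $(0,\infty)$. Since the sign of the exponent $r=\beta-2$ switches at $\beta=2$, the bound produced must involve $\lambda_{\min}$ in one regime and $\lambda_{\max}$ in the other, which is exactly the dichotomy in \eqref{eq:gij_lower}.

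For $\beta\ge 2$, the exponent $r\ge 0$, so $t\mapsto t^{r}$ is nondecreasing on $(0,\infty)$. Any convex combination satisfies $\theta\lambda_i+(1-\theta)\lambda_j\ge\min(\lambda_i,\lambda_j)\ge\lambda_{\min}(M)$, hence the integrand is at least $\lambda_{\min}(M)^{\beta-2}$ and integrating over $\theta\in[0,1]$ yields the first branch. For $1<\beta\le 2$, the exponent $r\le 0$ makes $t\mapsto t^{r}$ nonincreasing, so I would instead use the upper bound $\theta\lambda_i+(1-\theta)\lambda_j\le\max(\lambda_i,\lambda_j)\le\lambda_{\max}(M)$, which flips into a lower bound after raising to the nonpositive power $r$; again integration gives the stated bound. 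The boundary case $\beta=2$ is consistent because then $\beta-2=0$ and both branches reduce to $g_{ij}\ge(\beta-1)=1$, matching the exact value $g_{ij}=1$ obtained from \eqref{eq:gij_def}.

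There is no real obstacle here: the argument is a one-line monotonicity estimate inside the integral, with the only subtlety being to track the sign of $\beta-2$ so that the correct extremal eigenvalue appears. Well-posedness of $\lambda_{\min}(M)^{\beta-2}$ in the regime $1<\beta<2$ (where this quantity would be used only in the \emph{upper} bound of Lemma~\ref{lem:kernel_upper}, not here) is guaranteed under Assumption~\ref{assump:truncate} via Remark~\ref{rem:regularity}; for the lower bound proved in this lemma, $\lambda_{\max}(M)^{\beta-2}$ is unproblematic since $\lambda_{\max}(M)>0$ whenever $p\in\Delta^\circ$ under \eqref{eq:standing_fullsupport}.
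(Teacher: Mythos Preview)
Your proof is correct and follows essentially the same approach as the paper: invoke the integral representation of Lemma~\ref{lem:integral_rep}, then lower bound the integrand pointwise using the monotonicity of $t\mapsto t^{\beta-2}$, which yields $\lambda_{\min}(M)^{\beta-2}$ when $\beta\ge 2$ and $\lambda_{\max}(M)^{\beta-2}$ when $1<\beta\le 2$.
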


\begin{proof}
By Lemma~\ref{lem:integral_rep}, the integral is lower bounded by the minimum of its integrand on $[0,1]$.
If $\beta\ge 2$, then $t^{\beta-2}$ is increasing, so the minimum over the segment between $\lambda_i$ and $\lambda_j$
is attained at $\min\{\lambda_i,\lambda_j\}\ge\lambda_{\min}(M)$.
If $1<\beta\le 2$, then $t^{\beta-2}$ is decreasing, so the minimum is attained at $\max\{\lambda_i,\lambda_j\}\le\lambda_{\max}(M)$.
\end{proof}

\section{Relative Smoothness and Global Convergence}
\label{sec:smoothness}

\begin{definition}[Relative smoothness / strong convexity]
\label{def:relcurv}
Let $\omega$ be differentiable and $D_\omega(p',p)\triangleq \omega(p')-\omega(p)-\ip{\nabla\omega(p)}{p'-p}$.
A differentiable function $f$ is \emph{$L$-smooth relative to $\omega$} if
\[
f(p')\le f(p)+\ip{\nabla f(p)}{p'-p}+L\,D_\omega(p',p)\quad \forall p,p'.
\]
It is \emph{$\mu$-strongly convex relative to $\omega$} if
\[
f(p')\ge f(p)+\ip{\nabla f(p)}{p'-p}+\mu\,D_\omega(p',p)\quad \forall p,p'.
\]
For $\omega(p)=\sum_x p_x\log p_x$, one has $D_\omega(p',p)\!=\!\KL(p'\|p)$ and
$\D^2\omega[p](h,h)=\sum_x h_x^2/p_x$ on the tangent space $\sum_x h_x=0$.
\end{definition}

\begin{remark}[Hessian dominance implies relative curvature]
\label{rem:hess_to_rel}
If $f$ and $\omega$ are twice continuously differentiable on $\Delta^\circ$ and for all $p\in\Delta^\circ$ and tangent $h$
one has $\D^2 f[p](h,h)\le L\,\D^2\omega[p](h,h)$, then $f$ is $L$-smooth relative to $\omega$ in the sense of
Definition~\ref{def:relcurv} (and similarly, $\D^2 f[p](h,h)\ge \mu\,\D^2\omega[p](h,h)$ implies $\mu$-relative strong convexity),
by integrating the second-order bound along line segments in $\Delta^\circ$.
\end{remark}

\begin{theorem}[Relative Smoothness]
\label{thm:rel_smooth}
Let $\beta>1$ and $S(p)=\Tr\left[M(p)^\beta\right]$.
If $\beta\ge 2$, then for all $p\in\Delta$ and all tangent vectors $h$ (i.e.\ $\sum_x h_x=0$),
\begin{equation}
\label{eq:rel_smooth_hess}
\D^2S[p](h,h)\le L(p)\sum\nolimits_x \frac{h_x^2}{p_x},
\end{equation}
with
\begin{equation}
\label{eq:L_of_p}
L(p)\triangleq
2c_\beta\,\beta(\beta-1) \times
\begin{cases}
1, & \beta\ge 2,\\[0.3ex]
\lambda_{\min}(M(p))^{\beta-2}, & 1<\beta<2.
\end{cases}
\end{equation}
Consequently, $S$ is $L$-smooth relative to the negative entropy $\omega$ in the sense of Definition~\ref{def:relcurv}
(with constant $L=2c_\beta\,\beta(\beta-1)$ for $\beta\ge 2$).
If $1<\beta<2$, the same bound holds for all $p$ with $M(p)\succ 0$; moreover on $\Delta_\delta$
under Assumption~\ref{assump:truncate} and \eqref{eq:standing_fullsupport} one may take the constant $L_\delta$
in \eqref{eq:L_const_qlt2}.
\end{theorem}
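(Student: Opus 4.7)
The plan is to chain the exact Hessian of Lemma~\ref{lem:hess_exact}, the kernel upper bound of Lemma~\ref{lem:kernel_upper}, a matrix Cauchy--Schwarz step, and either a Schatten H\"older--Minkowski estimate (for $\beta\ge 2$) or a simple spectral bound (for $1<\beta<2$). In both regimes the target is the pointwise Hessian dominance $\D^2 S[p](h,h)\le L(p)\sum_x h_x^2/p_x$ with $L(p)$ as in \eqref{eq:L_of_p}, which by Remark~\ref{rem:hess_to_rel} upgrades to the claimed relative $L$-smoothness in the sense of Definition~\ref{def:relcurv}.

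First, substituting the kernel bound $g_{ij}\le (\beta-1)c_\beta(\lambda_i^r+\lambda_j^r)$ with $r\triangleq\beta-2$ into \eqref{eq:hess_exact}, symmetrizing the $(i,j)$ indices, and using $\sum_j P_j=I$ together with the identity $\sum_j\|P_i H P_j\|_F^2=\Tr[P_i H^2]$ collapses the double sum to
\[
\D^2 S[p](h,h)\le 2\beta(\beta-1)c_\beta\,\Tr[M^r H^2].
\]
I would then apply scalar Cauchy--Schwarz pointwise in $v\in\mathbb{C}^d$ with $c_x=h_x/\sqrt{p_x}$ and $u_x=\sqrt{p_x}\,A_x v$ to obtain the operator inequality
\[
H^2\preceq \Big(\textstyle\sum_x h_x^2/p_x\Big)\Big(\textstyle\sum_x p_x A_x^2\Big),
\]
and trace it against $M^r\succeq 0$ (well-defined on $\Delta_\delta$ when $r<0$ by Assumption~\ref{assump:truncate} and \eqref{eq:standing_fullsupport}):
\[
\Tr[M^r H^2]\le \Big(\textstyle\sum_x h_x^2/p_x\Big)\,\Tr\!\big[M^r\textstyle\sum_x p_x A_x^2\big].
\]

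The remaining task is to control the residual trace factor by the case-dependent constant in \eqref{eq:L_of_p}. For $\beta\ge 2$ I would invoke Schatten H\"older with the conjugate pair $(p,q)=(\beta/(\beta-2),\beta/2)$ (interpreted as $(\infty,1)$ at $\beta=2$),
\[
\Tr\!\big[M^r\textstyle\sum_x p_x A_x^2\big]\le \big(\Tr[M^\beta]\big)^{(\beta-2)/\beta}\,\big\|\textstyle\sum_x p_x A_x^2\big\|_{\beta/2},
\]
and exploit the normalization $\alpha\beta=1$, which makes $\|A_x\|_\beta=\Tr[W_x]^{1/\beta}=1$ and $\|A_x^2\|_{\beta/2}=\Tr[W_x]^{2/\beta}=1$; Minkowski's inequality in Schatten $\beta$- and $\beta/2$-norms then forces $\|M\|_\beta\le 1$ and $\|\sum_x p_x A_x^2\|_{\beta/2}\le 1$, so both factors are $\le 1$ and $\Tr[M^r H^2]\le \sum_x h_x^2/p_x$. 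For $1<\beta<2$, $r<0$ and $M^r\preceq \lambda_{\min}(M)^r I$ on $\Delta_\delta$; combined with the elementary estimate $\Tr[A_x^2]=\Tr[W_x^{2\alpha}]\le \Tr[W_x]=1$ (valid since $2\alpha\ge 1$), this gives $\Tr[M^r\sum_x p_x A_x^2]\le \lambda_{\min}(M)^r$ and hence the factor $\lambda_{\min}(M)^{\beta-2}$ in \eqref{eq:L_of_p}.

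The main obstacle will be the $\beta\ge 2$ regime: the naive chain $\Tr[M^r H^2]\le \Tr[H^2]\le (\sum_x h_x^2/p_x)\sum_x p_x\Tr[A_x^2]$, obtained via $M^r\preceq I$, is insufficient because $\Tr[A_x^2]=\Tr[W_x^{2\alpha}]$ can scale as $d^{1-2\alpha}$ when $\alpha<1/2$, producing a dimension-dependent constant. Keeping $M^r$ inside the trace and invoking the H\"older--Minkowski pairing calibrated by $\alpha\beta=1$ (so that $A_x^\beta=W_x$ is a unit-trace density) is the step that closes the estimate to the advertised dimension-free constant.
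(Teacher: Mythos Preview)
Your proposal is correct and follows essentially the same route as the paper. The only cosmetic differences are that the paper applies the weighted Cauchy--Schwarz in the Frobenius Hilbert space to the vectors $v_x=M^{r/2}A_x$ (rather than your operator inequality $H^2\preceq(\sum_x h_x^2/p_x)\sum_x p_xA_x^2$ traced against $M^r$) and, for $\beta\ge 2$, bounds each $\Tr[A_xM^rA_x]$ by a three-way H\"older with exponents $(\beta,\beta/(\beta-2),\beta)$ instead of your two-way H\"older plus Minkowski on $\|\sum_x p_xA_x^2\|_{\beta/2}$; by trace cyclicity both routes produce the identical residual $\sum_x p_x\Tr[M^rA_x^2]$ and the same final constants.
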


\begin{proof}
From Lemma~\ref{lem:hess_exact} and Lemma~\ref{lem:kernel_upper},
\[
\D^2S[p][h,h]
\le \beta(\beta-1)c_\beta\sum_{i,j}(\lambda_i^{r}+\lambda_j^{r})\,\Tr\left[P_iHP_jH\right].
\]
Since $M(p)\succ 0$, the spectral projectors satisfy $\sum_j P_j=I$. Using cyclicity of trace,
\begin{align*}
\sum_{i,j}\lambda_i^r \Tr\left[P_iHP_jH\right]
&=\sum_i \lambda_i^r \Tr\!\left[P_i H\Big(\sum_j P_j\Big)H\right] \\
&=\sum_i \lambda_i^r \Tr\left[P_i H^2\right] \\
&=\Tr\left[M^r H^2\right]=\Tr\left[HM^rH\right],
\end{align*}
and similarly $\sum_{i,j}\lambda_j^r \Tr\left[P_iHP_jH\right]=\Tr\left[HM^rH\right]$, hence
\begin{equation}
\label{eq:HM_power_bound}
\D^2S[p](h,h)\le 2c_\beta\,\beta(\beta-1)\,\Tr\!\left[HM^{\beta-2}H\right].
\end{equation}
Let $v_x = M^{\frac{\beta-2}{2}}A_x$. Then $\Tr\left[HM^{\beta-2}H\right]=\|\sum_x h_x v_x\|_F^2$.
By Cauchy--Schwarz with weights $p_x$,
\begin{equation}
\label{eq:cs_weighted}
\Big\|\sum_x h_x v_x\Big\|_F^2
\le \Big(\sum_x \frac{h_x^2}{p_x}\Big)\underbrace{\Big(\sum_x p_x \|v_x\|_F^2\Big)}_{B(p)}.
\end{equation}
We bound $B(p)=\sum_x p_x\Tr\left[A_x M^{\beta-2}A_x\right]$.
Assume each $W_x$ is a density operator, so $\Tr\left[W_x\right]=1$. Then
\[
\|A_x\|_\beta^\beta=\Tr\left[A_x^\beta\right]=\Tr\left[W_x\right]=1
\quad\Rightarrow\quad
\|A_x\|_\beta=1.
\]
\emph{Case $\beta\ge 2$.} Applying H\"older's inequality with exponents $(\beta,\frac{\beta}{\beta-2},\beta)$ yields
\begin{align*}
\Tr\left[A_x M^{\beta-2} A_x\right]
&\le \|A_x\|_\beta^2\,\|M^{\beta-2}\|_{\frac{\beta}{\beta-2}} \\
&= \|M\|_\beta^{\beta-2}=S(p)^{\frac{\beta-2}{\beta}}.
\end{align*}
Moreover $\|M\|_\beta\le \sum_x p_x\|A_x\|_\beta=1$, hence $S(p)\le 1$, so $B(p)\le 1$.
\emph{Case $1<\beta<2$.} Since $\beta-2<0$, $t\mapsto t^{\beta-2}$ is decreasing and
$\|M^{\beta-2}\|_\infty=\lambda_{\min}(M)^{\beta-2}$. Therefore
\begin{align*}
\Tr\left[A_x M^{\beta-2}A_x\right]
&\le \lambda_{\min}(M)^{\beta-2}\Tr\left[A_x^2\right] \\
&=\lambda_{\min}(M)^{\beta-2}\Tr\left[W_x^{2\alpha}\right].
\end{align*}
As $\beta<2 \iff \alpha>1/2$, we have $2\alpha>1$, and for eigenvalues $\lambda\in[0,1]$ one has $\lambda^{2\alpha}\le \lambda$,
so $\Tr\left[W_x^{2\alpha}\right]\le \Tr\left[W_x\right]=1$. Hence $B(p)\le \lambda_{\min}(M)^{\beta-2}$.
Combining \eqref{eq:HM_power_bound}--\eqref{eq:cs_weighted} yields \eqref{eq:rel_smooth_hess} with \eqref{eq:L_of_p}.
\end{proof}

\begin{theorem}[Global Sublinear Rate (best iterate)]
\label{thm:sublinear}
Assume $S$ is convex and $L$-smooth relative to $\omega$ with a \emph{constant} $L$ along the iterates
(e.g.\ $\beta\ge 2$, or $1<\beta<2$ on $\Delta_\delta$ with $L=L_\delta$),
and run Algorithm~\ref{alg:renyi_ba} with $\eta=1/L$.
Then
\[
\min_{0\le t\le T-1}\big(S(p^{t+1})-S(p^\star)\big)
\le \frac{L\,\KL(p^\star\|p^0)}{T},
\]
where $p^\star$ is a minimizer of $S$ over $\Delta$.
\end{theorem}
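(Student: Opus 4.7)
The plan is to follow the standard three-ingredient recipe for mirror descent under relative smoothness: (i) the descent inequality from relative smoothness with $\eta=1/L$, (ii) the three-point (Bregman--Pythagorean) inequality arising from the first-order optimality of the mirror-descent subproblem, and (iii) convexity of $S$ to pull in the comparison point $p^\star$. Telescoping, followed by a min-vs-average step, would then deliver the announced $O(1/T)$ rate.

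In detail, I would first invoke Theorem~\ref{thm:rel_smooth} (via Remark~\ref{rem:hess_to_rel}) to obtain, for each $t$, the descent inequality
\begin{equation*}
S(p^{t+1})\le S(p^t)+\ip{\nabla S(p^t)}{p^{t+1}-p^t}+L\,\KL(p^{t+1}\|p^t).
\end{equation*}
Second, from the first-order optimality of the Bregman projection in \eqref{eq:MD_step} with $\eta=1/L$, I would record the three-point inequality
\begin{equation*}
\ip{\nabla S(p^t)}{p^{t+1}-q}\le L\bigl(\KL(q\|p^t)-\KL(q\|p^{t+1})-\KL(p^{t+1}\|p^t)\bigr)
\end{equation*}
valid for every $q\in\Delta$. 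Setting $q=p^\star$ and adding the two inequalities cancels the $\KL(p^{t+1}\|p^t)$ term; the convexity of $S$ then supplies $\ip{\nabla S(p^t)}{p^t-p^\star}\ge S(p^t)-S(p^\star)$, which cancels the $S(p^t)$ term and leaves the per-step potential decrease
\begin{equation*}
S(p^{t+1})-S(p^\star)\le L\bigl(\KL(p^\star\|p^t)-\KL(p^\star\|p^{t+1})\bigr).
\end{equation*}
Summing over $t=0,\dots,T-1$ telescopes the right-hand side; discarding the nonnegative residual $\KL(p^\star\|p^T)$ and replacing the minimum by the average closes the argument.

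The main technical subtlety I foresee is the safeguard step \eqref{eq:smoothing_step}: the algorithmic iterate $p^{t+1}$ is a convex combination of the pure exponentiated-gradient output $\tilde p^{t+1}$ (for which the three-point inequality holds verbatim) and the uniform distribution, so the Bregman-projection optimality condition applies to $\tilde p^{t+1}$, not to $p^{t+1}$. I would reconcile this either by applying the chain above at $\tilde p^{t+1}$ and then transporting the resulting bound to $p^{t+1}$ via the convexity of $S$, or by verifying a posteriori that the unsafeguarded iterates themselves already lie in $\Delta_\delta$, so that the safeguard is inactive. Either route preserves the $O(1/T)$ rate up to an $O(\delta)$ perturbation that can be made negligible by choosing $\delta$ sufficiently small.
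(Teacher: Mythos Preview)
Your argument is correct and is precisely the standard relative-smoothness mirror-descent analysis; the paper in fact omits a proof of this theorem, but the very same three-point inequality and relative-smoothness descent step you invoke appear verbatim in the paper's proof of Theorem~\ref{thm:linear} (there combined with relative strong convexity rather than mere convexity), so your derivation is fully consistent with the paper's methodology. Your remark about the safeguard step~\eqref{eq:smoothing_step}---that the three-point inequality holds for $\tilde p^{t+1}$ rather than $p^{t+1}$---is a genuine subtlety that the paper does not address.
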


\begin{corollary}[Iteration complexity (sublinear)]
If $\beta\ge 2$ (so $L=2c_\beta \beta(\beta-1)$ is constant), then to achieve
$\min_{t<T} (S(p^{t})-S(p^\star))\le \varepsilon$ it suffices that
\[
T = O\!\left(\frac{L\,\KL(p^\star\|p^0)}{\varepsilon}\right).
\]
\end{corollary}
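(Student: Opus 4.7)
The claim is a one-line consequence of Theorem~\ref{thm:sublinear}, so the plan is simply to verify that the theorem's hypotheses hold with a truly global constant $L$ in this regime, and then to invert the resulting bound into an iteration-count estimate.

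First I would observe that, when $\beta \ge 2$, Theorem~\ref{thm:rel_smooth} supplies a $p$-independent relative-smoothness constant $L = 2 c_\beta \beta(\beta-1)$. Consequently the spectral lower bound $m_\delta$ and the truncation parameter $\delta$ play no role here; the safeguard in Algorithm~\ref{alg:renyi_ba} enters only trivially (it keeps iterates in $\Delta^\circ$, which is already automatic for the closed-form exponentiated-gradient update started from the uniform distribution). Thus the constant-$L$ assumption of Theorem~\ref{thm:sublinear} is satisfied along the entire trajectory when we run with $\eta = 1/L$.

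Next I would apply Theorem~\ref{thm:sublinear} to obtain
\[
\min_{0 \le t \le T-1} \bigl( S(p^{t+1}) - S(p^\star) \bigr) \le \frac{L \, \KL(p^\star \Vert p^0)}{T},
\]
and then rearrange: to make the right-hand side at most $\varepsilon$ it suffices to choose $T \ge L \, \KL(p^\star \Vert p^0) / \varepsilon$, which is exactly $T = O\bigl( L \, \KL(p^\star \Vert p^0) / \varepsilon \bigr)$ as claimed.

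There is no substantive obstacle; the only minor points worth flagging are (i) the harmless index shift between $S(p^{t+1})$ in Theorem~\ref{thm:sublinear} and $S(p^t)$ in the corollary, which costs at most one extra iteration and is absorbed in the $O(\cdot)$; and (ii) the finiteness of $\KL(p^\star \Vert p^0)$, which we get for free from the prescribed initialization $p^0 = \mathrm{uniform}(\cX)$, since then $\KL(p^\star \Vert p^0) \le \log |\cX|$ uniformly in $p^\star$. Combining this with the explicit constant from Theorem~\ref{thm:rel_smooth} makes the complexity fully explicit, $T = O\bigl( c_\beta \beta(\beta-1) \log |\cX| / \varepsilon \bigr)$.
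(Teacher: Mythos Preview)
Your proposal is correct and matches the paper's own treatment: the corollary is stated without proof immediately after Theorem~\ref{thm:sublinear}, so the intended argument is exactly the ``invoke the theorem and invert the bound'' step you carry out. Your additional remarks on the index shift and on $\KL(p^\star\|p^0)\le\log|\cX|$ for uniform $p^0$ are helpful clarifications beyond what the paper spells out.
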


\section{Relative Strong Convexity and Linear Rate}
\label{sec:linear}

We now prove local linear convergence on $\Delta_\delta \triangleq \{p \in \Delta : p_x \ge \delta\}$.
Let $T=\{h\in\mathbb{R}^{|\cX|}:\sum_x h_x=0\}$ denote the simplex tangent space.

\begin{lemma}[Tangent-Space Nondegeneracy]
\label{lem:tangent_nondeg}
Let $G_{xy} = \Tr\left[A_x A_y\right]$ be the Frobenius Gram matrix. Define
\[
\gamma \triangleq \lambda_{\min}(G|_T)
= \inf_{\substack{h \in T\\h \ne 0}} \frac{h^\top G h}{\|h\|_2^2}.
\]
Then $\Tr\left[H^2\right]=\|\sum_x h_xA_x\|_F^2 = h^\top G h \ge \gamma \|h\|_2^2$.
Moreover, $\gamma>0$ iff the map $h\mapsto \sum_x h_xA_x$ is injective on $T$.
\end{lemma}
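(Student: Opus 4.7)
The statement is really a piece of linear algebra: it identifies $\|H\|_F^2$ with the quadratic form of a Gram matrix $G$, then characterizes when that form is coercive on the tangent subspace $T=\{h:\sum_x h_x=0\}$. I would verify the three assertions in sequence, using only that each $A_x = W_x^\alpha$ is Hermitian PSD (so $H$ is Hermitian as well). None of the steps should be computationally heavy; the whole result follows from the spectral theorem for symmetric PSD matrices.

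First I would unpack the Frobenius norm. Since $H$ is Hermitian, $\|H\|_F^2 = \Tr[H^\dagger H] = \Tr[H^2]$, and expanding bilinearly yields
\[
\Tr[H^2]=\sum_{x,y} h_x h_y\,\Tr[A_x A_y] = h^\top G h,
\]
which simultaneously proves the stated equalities and shows that $G$, being a Gram matrix under the Hilbert--Schmidt inner product on Hermitian operators, is symmetric PSD. Restricting to $T$ via $\Pi_T G \Pi_T$ (with $\Pi_T$ the orthogonal projector onto $T$) preserves symmetry and positive semidefiniteness, so $\gamma = \lambda_{\min}(G|_T)\ge 0$ is well defined, and the Rayleigh-quotient variational formula immediately gives $h^\top G h \ge \gamma \|h\|_2^2$ for every $h\in T$.

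For the equivalence, I would use that for $h\in T$ we have $h^\top G h = \|H\|_F^2$, so the quadratic form vanishes at $h$ iff $\Phi(h)\triangleq \sum_x h_x A_x = 0$. Since $G|_T$ is symmetric PSD, $\gamma>0$ is equivalent to $\ker(G|_T)=\{0\}$, which by the previous sentence is the same as $T\cap \ker \Phi=\{0\}$, i.e.\ injectivity of $\Phi$ on $T$. The only mild subtlety is being careful about what \emph{restriction to} $T$ means here (the operator $\Pi_T G \Pi_T$, not a principal submatrix indexed by any particular coordinates), but once that bookkeeping is settled there is no real obstacle, and the lemma follows.
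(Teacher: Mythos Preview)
Your proposal is correct and follows essentially the same route as the paper: identify $h^\top G h=\|\sum_x h_x A_x\|_F^2$ via the Gram-matrix structure, invoke the Rayleigh quotient on the restriction to $T$ for the inequality, and read off the injectivity equivalence from the kernel of the PSD form. The paper's own proof is just a two-sentence sketch of exactly these steps, so your version is a faithful (and more detailed) expansion of it.
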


\begin{proof}
$G$ is a Gram matrix, hence positive semi-definite and $h^\top G h=\|\sum_x h_xA_x\|_F^2$.
Restricting to $T$ and applying the Rayleigh quotient yields the bound and characterization.
\end{proof}

\begin{theorem}[Relative Strong Convexity on $\Delta_\delta$]
\label{thm:rel_strong}
Assume $\gamma>0$ from Lemma~\ref{lem:tangent_nondeg} and fix $\delta\in(0,1/|\cX|]$.
Let $p\in\Delta_\delta$ and $h\in T$.
\begin{enumerate}
\item If $1<\beta\le 2$, then
\begin{equation}
\label{eq:rsc_1}
\D^2S[p](h,h) \ge \beta(\beta-1)\,\gamma\,\delta \sum\nolimits_x \frac{h_x^2}{p_x}.
\end{equation}
\item If $\beta>2$, assume additionally that $\sum_{x\in\cX} A_x\succ 0$. Then for all $p\in\Delta_\delta$,
\[
M(p) \succeq \delta\sum_x A_x
\Rightarrow
\lambda_{\min}(M(p))\ge m_\delta >0,
\]
where $m_\delta \triangleq \delta\,\lambda_{\min}\Big(\sum_x A_x\Big)$, and
\begin{equation}
\label{eq:rsc_2}
\D^2S[p](h,h) \ge \beta(\beta-1)\,m_\delta^{\beta-2}\,\gamma\,\delta \sum\nolimits_x \frac{h_x^2}{p_x}.
\end{equation}
\end{enumerate}
Hence, on $\Delta_\delta$, $S$ is $\mu$-strongly convex relative to $\omega$ with $\mu$ given by the
right-hand side coefficient in \eqref{eq:rsc_1} or \eqref{eq:rsc_2}.
\end{theorem}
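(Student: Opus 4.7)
The plan is to chain the exact Hessian of Lemma~\ref{lem:hess_exact} with the kernel lower bound of Lemma~\ref{lem:kernel_lower}, then invoke tangent-space nondegeneracy (Lemma~\ref{lem:tangent_nondeg}) and absorb the floor $p_x\ge\delta$ to match the entropic Hessian $\sum_x h_x^2/p_x$.

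First, I would extract a uniform scalar lower bound on the divided-difference coefficients. Lemma~\ref{lem:kernel_lower} gives $g_{ij}\ge(\beta-1)\kappa(p)$ with $\kappa(p)=\lambda_{\min}(M(p))^{\beta-2}$ when $\beta\ge 2$ and $\kappa(p)=\lambda_{\max}(M(p))^{\beta-2}$ when $1<\beta\le 2$. Pulling this uniform factor out of the Hessian expression and using $\sum_i P_i=I$ with cyclicity of the trace collapses the kernel sum:
\[
\sum_{i,j}\|P_iHP_j\|_F^2=\sum_j\Tr[H^2P_j]=\Tr[H^2].
\]
Next, Lemma~\ref{lem:tangent_nondeg} gives $\Tr[H^2]=h^\top G h\ge\gamma\|h\|_2^2$ on $T$, and $p\in\Delta_\delta$ yields $h_x^2\ge \delta\,h_x^2/p_x$, hence $\|h\|_2^2\ge \delta\sum_x h_x^2/p_x$. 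Assembling these inequalities,
\[
\D^2S[p](h,h)\ge \beta(\beta-1)\,\kappa(p)\,\gamma\,\delta\sum_x \frac{h_x^2}{p_x}.
\]

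It remains to lower-bound $\kappa(p)$ in each regime. For $1<\beta\le 2$, I would show $\lambda_{\max}(M(p))\le 1$ by operator monotonicity: since $t\mapsto t^\alpha$ is operator monotone on $[0,\infty)$ for $\alpha\in(0,1)$, $W_x\preceq I$ implies $A_x=W_x^\alpha\preceq I$ and hence $M(p)\preceq I$. Because $\beta-2\le 0$, we get $\lambda_{\max}(M)^{\beta-2}\ge 1$, yielding $\kappa(p)\ge 1$ and \eqref{eq:rsc_1}; note this regime does \emph{not} require $\sum_x A_x\succ 0$ beyond the standing assumption. For $\beta>2$, the floor $p_x\ge\delta$ directly gives $M(p)\succeq \delta\sum_x A_x$, so under $\sum_x A_x\succ 0$ we have $\lambda_{\min}(M(p))\ge m_\delta$, and hence $\kappa(p)\ge m_\delta^{\beta-2}$, yielding \eqref{eq:rsc_2}. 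Relative strong convexity in the sense of Definition~\ref{def:relcurv} then follows from Remark~\ref{rem:hess_to_rel}.

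The main obstacle is conceptual rather than computational: bridging the matrix-side bound $\Tr[H^2]\ge\gamma\|h\|_2^2$ (living in unweighted $\ell^2$) with the entropic Hessian $\sum_x h_x^2/p_x$ (living in weighted $\ell^2$ with weights $1/p_x$). The truncation $p_x\ge\delta$ is precisely what enables this translation, and it is the source of the unavoidable linear factor $\delta$ in the strong-convexity constant. Apart from this bridging step, the argument is a clean assembly of the three preceding lemmas plus a single operator-monotonicity observation needed to remove the $\sum_x A_x\succ 0$ hypothesis when $\beta\le 2$.
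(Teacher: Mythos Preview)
Your proposal is correct and follows essentially the same route as the paper: apply the kernel lower bound to the exact Hessian, collapse $\sum_{i,j}\|P_iHP_j\|_F^2=\Tr[H^2]$ via $\sum_i P_i=I$, invoke tangent-space nondegeneracy, and convert $\|h\|_2^2$ to $\sum_x h_x^2/p_x$ using the floor $p_x\ge\delta$. The only cosmetic difference is that you justify $\lambda_{\max}(M(p))\le 1$ via operator monotonicity of $t\mapsto t^\alpha$ (giving $A_x\preceq I$ and hence $M(p)\preceq I$), whereas the paper uses the equivalent norm bound $\|M(p)\|_\infty\le\sum_x p_x\|A_x\|_\infty\le 1$.
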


\begin{proof}
For $p\in\Delta_\delta$ under Assumption~\ref{assump:truncate} and \eqref{eq:standing_fullsupport}, we have $M(p)\succ 0$.
From Lemma~\ref{lem:hess_exact} and Lemma~\ref{lem:kernel_lower},
\begin{align*}
\D^2S[p](h,h)
&= \beta\sum_{i,j} g_{ij}\,\Tr\left[P_iHP_jH\right] \\
&\ge \beta(\beta-1)\kappa(p)\sum_{i,j}\Tr\left[P_iHP_jH\right] \\
&= \beta(\beta-1)\kappa(p)\Tr\left[H^2\right],
\end{align*}
where $\kappa(p)=\lambda_{\max}(M(p))^{\beta-2}$ for $1<\beta\le 2$ and $\kappa(p)=\lambda_{\min}(M(p))^{\beta-2}$ for $\beta\ge 2$.
By Lemma~\ref{lem:tangent_nondeg}, $\Tr\left[H^2\right]\ge \gamma\|h\|_2^2$.
For $p\in\Delta_\delta$,
\[
\|h\|_2^2=\sum_x h_x^2 \ge \delta\sum\nolimits_x \frac{h_x^2}{p_x}.
\]
If $1<\beta\le 2$, then $\|A_x\|_\infty\le 1$ implies $\lambda_{\max}(M(p))\le \sum_x p_x\|A_x\|_\infty\le 1$,
and since $\beta-2\le 0$ we get $\kappa(p)=\lambda_{\max}(M(p))^{\beta-2}\ge 1$, yielding \eqref{eq:rsc_1}.
If $\beta>2$ and $\sum_x A_x\succ 0$, then $\lambda_{\min}(M(p))\ge m_\delta$, so $\kappa(p)\ge m_\delta^{\beta-2}$,
yielding \eqref{eq:rsc_2}.
\end{proof}

\begin{theorem}[Linear Convergence in KL]
\label{thm:linear}
Assume that on $\Delta_\delta$ the function $S$ is $L$-smooth and $\mu$-strongly convex relative to $\omega$,
and that the iterates produced by Algorithm~\ref{alg:renyi_ba} remain in $\Delta_\delta$ (guaranteed by \eqref{eq:smoothing_step}).
Let $p^\star$ denote the (unique) minimizer of $S$ over $\Delta_\delta$ and set $\eta=1/L$.
Then
\begin{equation}
\label{eq:linear_KL}
\KL(p^\star\|p^{t+1})
\le \Big(1-\frac{\mu}{L}\Big)\KL(p^\star\|p^{t}).
\end{equation}
\end{theorem}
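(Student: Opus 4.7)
The plan is to run the standard ``three-point'' analysis for mirror descent under relative smoothness and relative strong convexity, using the specific fact that for the entropic mirror map $\omega$ the Bregman divergence $D_\omega$ coincides with $\KL$. I would treat the update at each step as the exact minimizer of the MD subproblem over $\Delta_\delta$ (the safeguard \eqref{eq:smoothing_step} is only used to guarantee the iterates do not leave $\Delta_\delta$, where the relative curvature bounds apply; see the obstacle paragraph below).

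First, I would record the Bregman three-point identity
\[
\ip{\nabla\omega(p^{t+1})-\nabla\omega(p^{t})}{p^\star-p^{t+1}}
= D_\omega(p^\star,p^{t}) - D_\omega(p^\star,p^{t+1}) - D_\omega(p^{t+1},p^{t}),
\]
which is a direct algebraic consequence of the definition $D_\omega(a,b)=\omega(a)-\omega(b)-\ip{\nabla\omega(b)}{a-b}$. Next, I would write the first-order optimality condition for the MD update \eqref{eq:MD_step}: since $p^\star\in\Delta_\delta$ is feasible,
\[
\ip{\eta\,\nabla S(p^{t})+\nabla\omega(p^{t+1})-\nabla\omega(p^{t})}{p^\star-p^{t+1}}\ge 0.
\]
Combining the two displays with $\eta=1/L$ yields
\[
\ip{\nabla S(p^{t})}{p^{t+1}-p^\star}
\le L\bigl[D_\omega(p^\star,p^{t}) - D_\omega(p^\star,p^{t+1}) - D_\omega(p^{t+1},p^{t})\bigr].
\]

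Then I would invoke the two hypotheses of Definition~\ref{def:relcurv}. Applying $L$-relative smoothness at $(p^{t},p^{t+1})$ gives
\[
S(p^{t+1}) \le S(p^{t}) + \ip{\nabla S(p^{t})}{p^{t+1}-p^{t}} + L\,D_\omega(p^{t+1},p^{t}),
\]
while $\mu$-relative strong convexity at $(p^{t},p^\star)$ gives
\[
S(p^\star) \ge S(p^{t}) + \ip{\nabla S(p^{t})}{p^\star-p^{t}} + \mu\,D_\omega(p^\star,p^{t}).
\]
Subtracting the two, splitting $\ip{\nabla S(p^{t})}{p^{t+1}-p^\star}$ as $\ip{\nabla S(p^{t})}{p^{t+1}-p^{t}}-\ip{\nabla S(p^{t})}{p^\star-p^{t}}$, and plugging in the variational inequality above, the $L\,D_\omega(p^{t+1},p^{t})$ terms cancel and I am left with
\[
S(p^{t+1})-S(p^\star) \le (L-\mu)\,D_\omega(p^\star,p^{t}) - L\,D_\omega(p^\star,p^{t+1}).
\]
Since $p^\star$ minimizes $S$ on $\Delta_\delta$, the left-hand side is nonnegative, so
\[
L\,D_\omega(p^\star,p^{t+1}) \le (L-\mu)\,D_\omega(p^\star,p^{t}),
\]
and identifying $D_\omega(\cdot,\cdot)=\KL(\cdot\|\cdot)$ gives \eqref{eq:linear_KL}.

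The main obstacle is the fact that Algorithm~\ref{alg:renyi_ba} computes $\tilde p^{t+1}$ as the unconstrained closed-form EG minimizer on $\Delta$ and then projects onto $\Delta_\delta$ via the affine smoothing \eqref{eq:smoothing_step}, so the variational inequality used above is, strictly speaking, the one for $\tilde p^{t+1}$ on $\Delta$ rather than for $p^{t+1}$ on $\Delta_\delta$. To close this gap, I would either (i) interpret the MD step as being performed on $\Delta_\delta$ directly (the safeguarded update corresponds to a minor modification in which the minimization is over $\Delta_\delta$), or (ii) show that since $p^\star\in\Delta_\delta$ lies in the image of the affine smoothing map, the contraction for $\tilde p^{t+1}$ transfers to $p^{t+1}$ up to absorbing the smoothing into the definition of the iterate. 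Both options are routine; the rest of the argument is the textbook Lu--Freund--Nesterov contraction under relative smoothness/strong convexity.
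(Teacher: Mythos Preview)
Your proposal is correct and follows essentially the same route as the paper: derive the three-point inequality from the optimality condition of the mirror-descent subproblem, combine it with the relative-smoothness upper bound and the relative-strong-convexity lower bound, cancel the $D_\omega(p^{t+1},p^t)$ terms, and drop $S(p^{t+1})-S(p^\star)\ge 0$ to obtain the KL contraction. Your treatment of the safeguard issue is in fact more explicit than the paper's, which simply invokes the optimality condition for \eqref{eq:MD_step} at $p^{t+1}$ without commenting on the affine smoothing step.
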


\begin{proof}
\textbf{(i) Three-point inequality.}
The optimality condition for \eqref{eq:MD_step} implies for any $u\in\Delta$,
\begin{multline*}
\ip{\nabla S(p^t)}{p^{t+1}-u} \\
\le \frac{1}{\eta}\Big(\KL(u\|p^t) - \KL(u\|p^{t+1}) - \KL(p^{t+1}\|p^t)\Big).
\end{multline*}
\textbf{(ii) Relative smoothness descent.}
Relative smoothness gives
\[
S(p^{t+1}) \le S(p^t) + \ip{\nabla S(p^t)}{p^{t+1}-p^t} + \frac{1}{\eta}\KL(p^{t+1}\|p^t).
\]
Using the three-point inequality with $u=p^\star$ and decomposing
$\ip{\nabla S(p^t)}{p^{t+1}-p^t}=\ip{\nabla S(p^t)}{p^{t+1}-p^\star}+\ip{\nabla S(p^t)}{p^\star-p^t}$,
the $\KL(p^{t+1}\|p^t)$ terms cancel and we obtain
\begin{align*}
S(p^{t+1})
&\le S(p^t)+\ip{\nabla S(p^t)}{p^\star-p^t} \\
&\quad +\frac{1}{\eta}\Big(\KL(p^\star\|p^t)-\KL(p^\star\|p^{t+1})\Big).
\end{align*}
\textbf{(iii) Relative strong convexity.}
Relative strong convexity gives
\[
S(p^\star)\ge S(p^t)+\ip{\nabla S(p^t)}{p^\star-p^t}+\mu\,\KL(p^\star\|p^t),
\]
or equivalently $S(p^t)+\ip{\nabla S(p^t)}{p^\star-p^t}\le S(p^\star)-\mu\KL(p^\star\|p^t)$.
Substituting yields
\begin{align*}
S(p^{t+1})
&\le S(p^\star) - \mu\KL(p^\star\|p^t) \\
&\quad +L\Big(\KL(p^\star\|p^t)-\KL(p^\star\|p^{t+1})\Big).
\end{align*}
Since $p^\star$ minimizes $S$ over $\Delta_\delta$ and $p^{t+1}\in\Delta_\delta$, we have $S(p^{t+1})\ge S(p^\star)$,
so dropping $S(p^{t+1})-S(p^\star)\ge 0$ gives \eqref{eq:linear_KL}.
\end{proof}

\begin{corollary}[Iteration complexity (linear)]
Under the assumptions of Theorem~\ref{thm:linear}, to achieve $\KL(p^\star\|p^{t})\le \varepsilon$ it suffices that
\[
t = O\!\left(\frac{L}{\mu}\log\frac{\KL(p^\star\|p^0)}{\varepsilon}\right).
\]
\end{corollary}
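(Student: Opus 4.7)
The plan is to simply iterate the one-step contraction established in Theorem~\ref{thm:linear} and pass to logarithms. Since the iterates remain in $\Delta_\delta$ by the smoothing safeguard \eqref{eq:smoothing_step}, the hypothesis $\KL(p^\star\|p^{t+1}) \le (1-\mu/L)\,\KL(p^\star\|p^t)$ applies at every step. Unrolling $t$ times gives
\[
\KL(p^\star\|p^{t}) \le \Big(1-\tfrac{\mu}{L}\Big)^{t}\,\KL(p^\star\|p^{0}).
\]
Note that $\mu/L\in(0,1]$ since every function is at most as strongly convex as it is smooth with respect to the same reference $\omega$; in particular the contraction factor is strictly less than one whenever $\mu<L$, and the inequality is trivial otherwise.

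\textbf{Converting to an $\varepsilon$-bound.} Next I would apply the elementary inequality $1-x\le e^{-x}$ for $x\in[0,1]$ with $x=\mu/L$ to obtain
\[
\KL(p^\star\|p^{t}) \le \exp\!\Big(-\tfrac{\mu}{L}\,t\Big)\,\KL(p^\star\|p^{0}).
\]
To guarantee $\KL(p^\star\|p^{t})\le \varepsilon$ it then suffices that $\exp(-\mu t/L)\,\KL(p^\star\|p^{0}) \le \varepsilon$, equivalently
\[
t \ge \frac{L}{\mu}\,\log\!\frac{\KL(p^\star\|p^{0})}{\varepsilon},
\]
which yields the claimed $O\!\left(\tfrac{L}{\mu}\log\tfrac{\KL(p^\star\|p^{0})}{\varepsilon}\right)$ iteration complexity.

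\textbf{Obstacles.} There is no substantive technical obstacle: the corollary is an immediate consequence of the geometric decay in Theorem~\ref{thm:linear}. The only points worth flagging are (i) the implicit assumption that $\KL(p^\star\|p^{0})$ is finite, which holds because the uniform initialization $p^0$ has full support and hence $p^0_x\ge 1/|\cX|>0$ for every $x$, and (ii) the fact that the contraction factor $1-\mu/L$ (and hence the bound $L/\mu$) depends on the truncation parameter $\delta$ through the explicit constants given in Theorems~\ref{thm:rel_smooth} and~\ref{thm:rel_strong}; in the regime $1<\beta<2$, for instance, both $L_\delta$ and $\mu$ degrade as $\delta\downarrow 0$, so the logarithmic complexity above is sharp only for a fixed $\delta$.
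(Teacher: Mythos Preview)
Your proposal is correct and follows exactly the standard argument the paper implicitly relies on: the paper states this corollary without proof, treating it as an immediate consequence of the one-step contraction \eqref{eq:linear_KL} in Theorem~\ref{thm:linear}, and your unrolling plus the bound $1-x\le e^{-x}$ is precisely how one fills in that gap.
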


\section{Numerical Experiments}
\label{sec:num}

We evaluate the proposed mirror-descent Blahut--Arimoto iteration (Algorithm~\ref{alg:renyi_ba}) on a randomly generated \emph{noncommuting} c-q channel. The goal is to illustrate (i) the computed Petz--R\'enyi capacity across $\alpha$, (ii) the convergence behavior of our iterates, and (iii) basic scaling with the problem size.

\subsection{Experimental setup}
We consider a c-q channel $W:\cX\to\Dens(\mathbb{C}^d)$ with $|\cX|=10$ and $d=6$.
Each output state is generated randomly (noncommuting) and then mixed with the maximally mixed state:
\[
W_x \leftarrow (1-\varepsilon)\rho_x + \varepsilon \frac{I}{d},
\qquad \varepsilon = 10^{-2},
\]
ensuring full rank and numerical stability. We sweep $\alpha\in\{0.10,0.20,\dots,0.90\}$ and run Algorithm~\ref{alg:renyi_ba} with
maximum $T=30000$ iterations, stopping tolerance $10^{-8}$, and probability floor $\delta=10^{-11}$ (with an adaptive stepsize option enabled in our implementation).

\subsection{Capacity sweep and runtime profile}
Fig.~\ref{fig:alpha_metrics_ours} reports the estimated capacities $C_\alpha(W)$, wall-clock runtime, and iterations-to-stop across $\alpha$.
On this noncommuting instance, the estimated capacity increases smoothly with $\alpha$ (from $\approx 0.064$ at $\alpha=0.1$ to $\approx 0.413$ at $\alpha=0.9$).
The computational effort is $\alpha$-dependent: smaller $\alpha$ values can be noticeably harder (e.g., $\alpha=0.1$ reaches the iteration cap in our run),
while mid-range $\alpha$ converges quickly (e.g., at $\alpha=0.5$ the method stops after about $1.9\times 10^3$ iterations and $\approx 0.67$s).

\begin{figure*}[!t]
    \centering
    \includegraphics[width=0.32\textwidth]{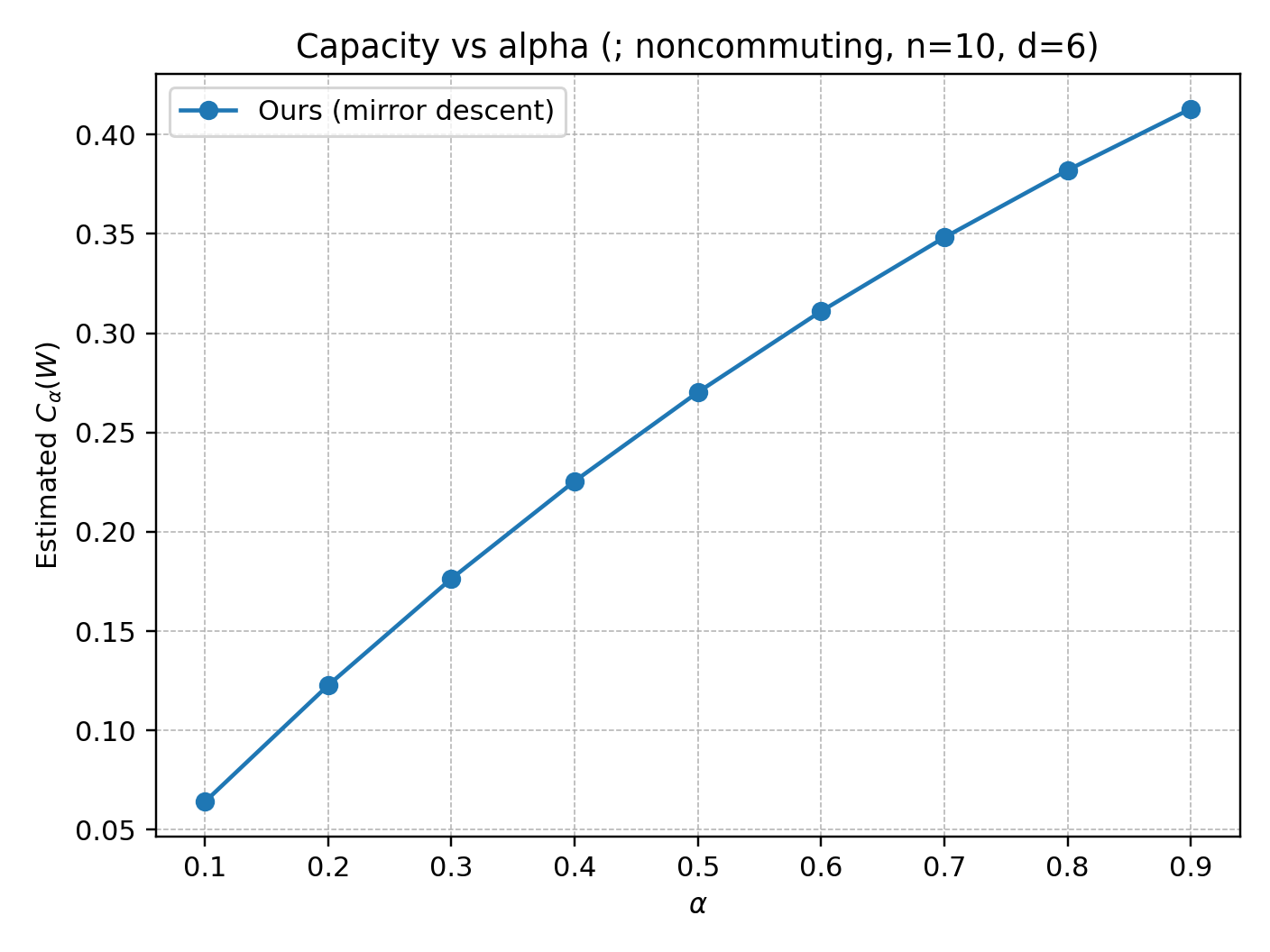}\hfill
    \includegraphics[width=0.32\textwidth]{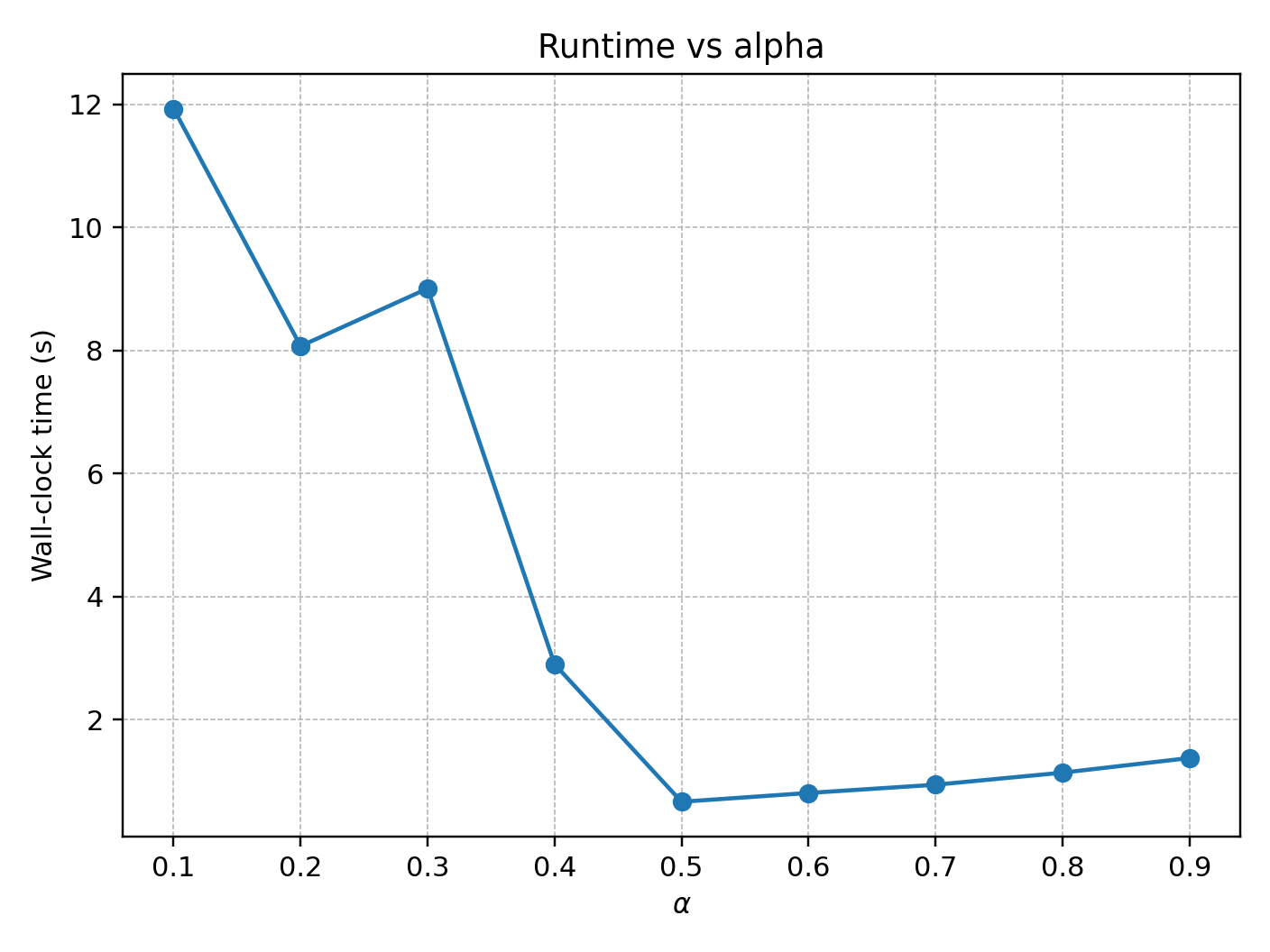}\hfill
    \includegraphics[width=0.32\textwidth]{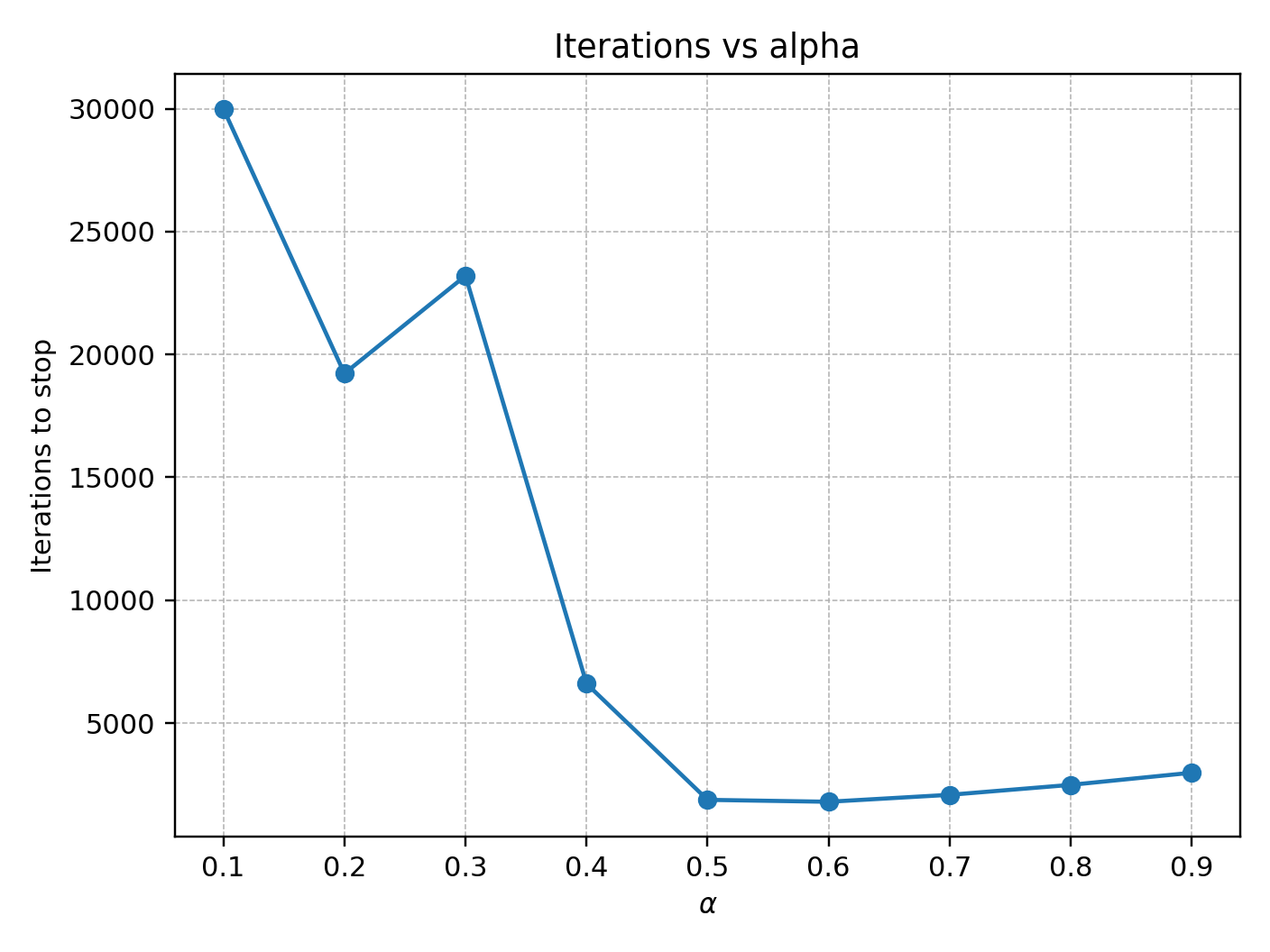}
    \caption{Evaluation on a noncommuting channel ($|\cX|=10$, $d=6$).
    (Left) Estimated Petz--R\'enyi capacity $C_\alpha(W)$ versus $\alpha$.
    (Center) Wall-clock runtime versus $\alpha$.
    (Right) Iterations-to-stop versus $\alpha$.}
    \label{fig:alpha_metrics_ours}
\end{figure*}

\subsection{Convergence certificate (duality gap)}
To monitor convergence without relying on an external baseline, we plot a standard stationarity certificate for convex optimization on the simplex: the duality
\[
g(p)\triangleq \max_{q\in\Delta}\langle \nabla S(p),\,p-q\rangle
= \langle \nabla S(p),p\rangle - \min_{x\in\cX} [\nabla S(p)]_x,
\]
which equals zero if and only if $p$ is first-order optimal for the simplex-constrained problem.
Fig.~\ref{fig:convergence_fwgap} shows $g(p^t)$ for $\alpha\in\{0.2,0.5,0.8\}$.
In all three cases the gap decreases steadily over the run, and the decay is substantially faster for $\alpha=0.5$ and $\alpha=0.8$ than for $\alpha=0.2$,
consistent with the iteration-count trends in Fig.~\ref{fig:alpha_metrics_ours}.

\begin{figure*}[!t]
    \centering
    \includegraphics[width=0.32\textwidth]{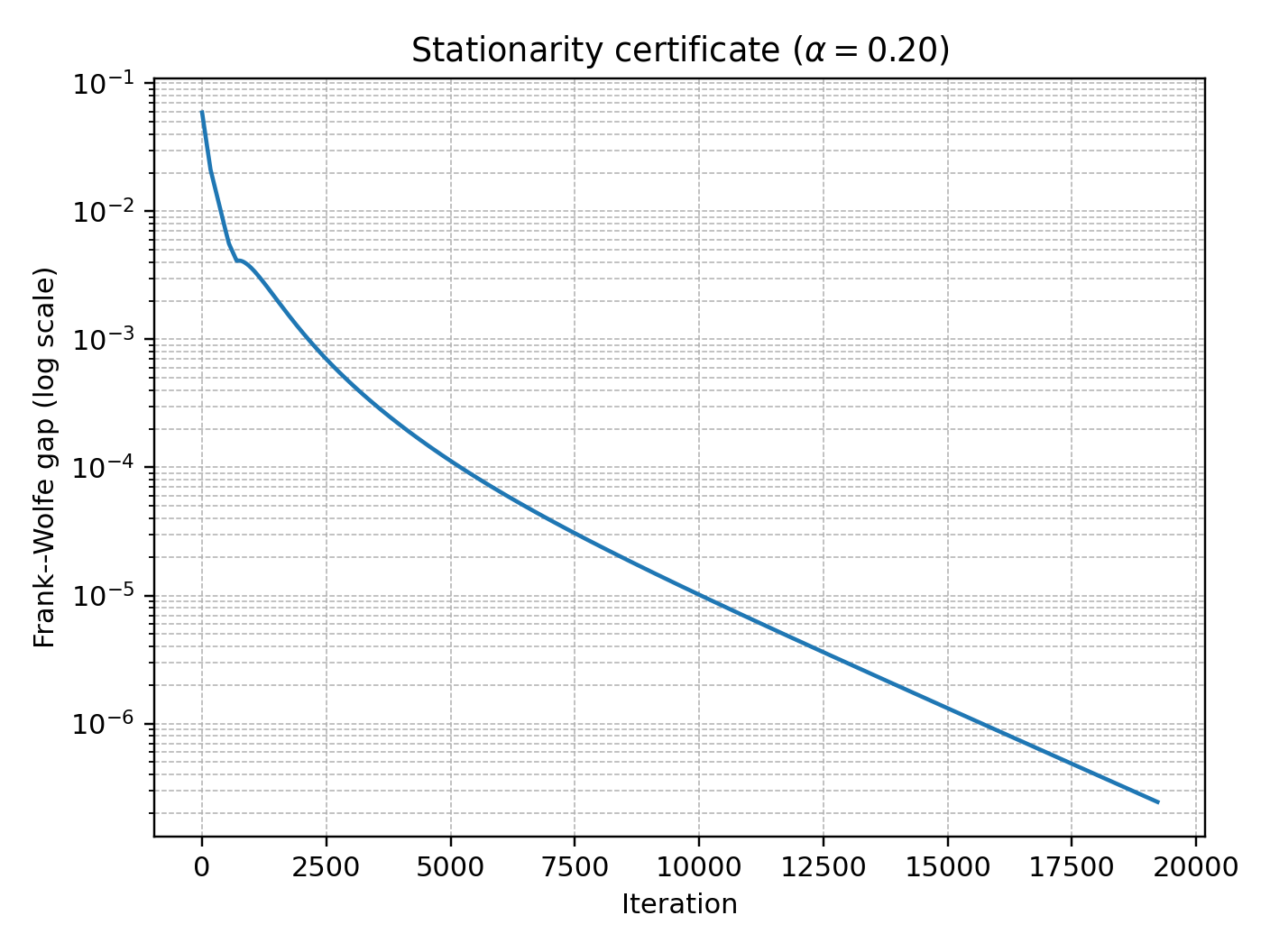}\hfill
    \includegraphics[width=0.32\textwidth]{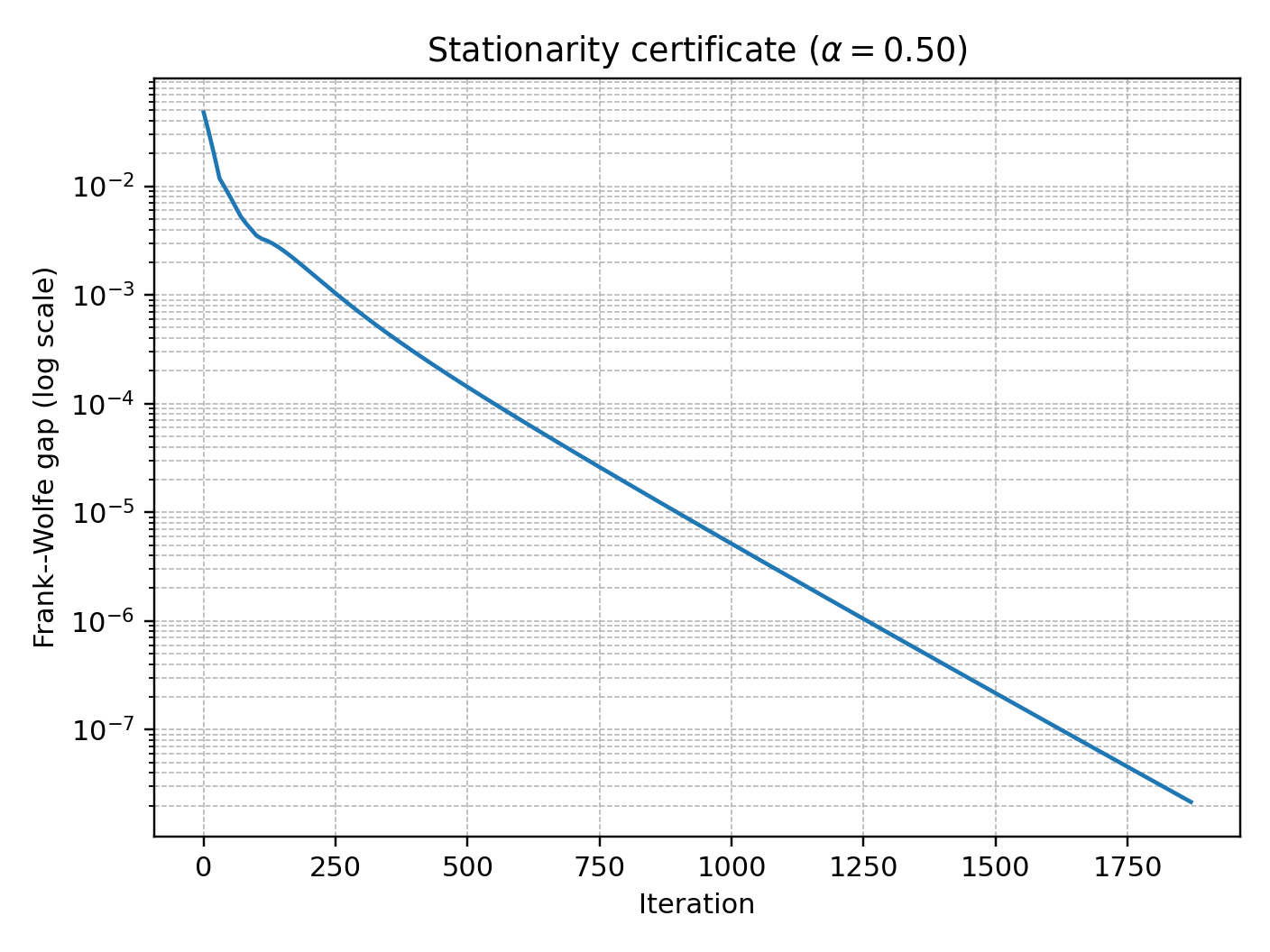}\hfill
    \includegraphics[width=0.32\textwidth]{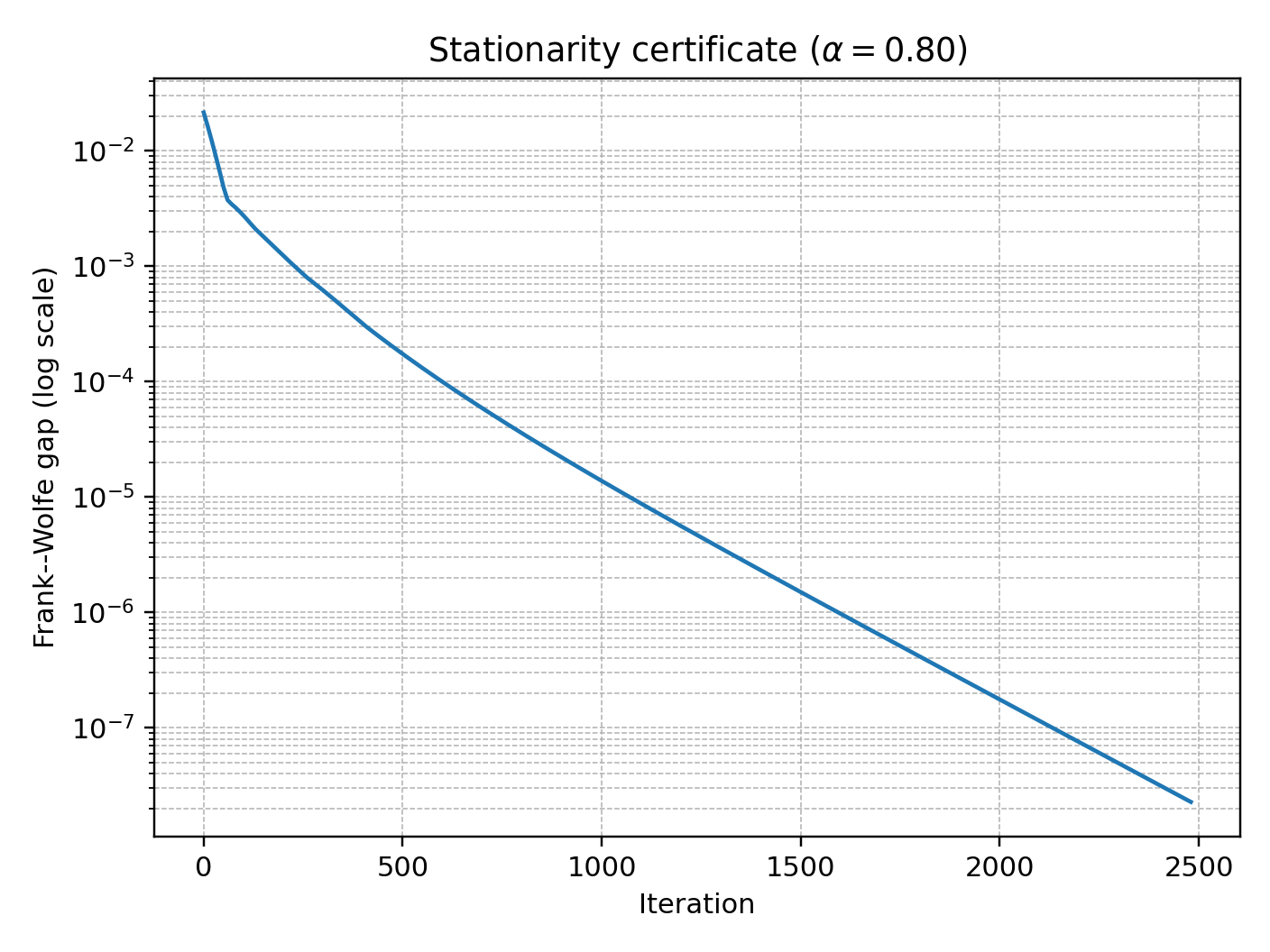}
    \caption{Convergence traces (gap) for $\alpha\in\{0.2,0.5,0.8\}$.
    The gap $g(p^t)$ provides a first-order stationarity certificate on the simplex.}
    \label{fig:convergence_fwgap}
\end{figure*}

\subsection{Scaling with $|\cX|$ and $d$}
Finally, we probe empirical scaling at a representative value $\alpha=0.5$ by varying the alphabet size $|\cX|$ (with $d=6$ fixed) and the output dimension $d$ (with $|\cX|=10$ fixed), using the same stopping tolerance and a maximum of $T=30000$ iterations (two repeats; we report median runtime).
Table~\ref{tab:scaling} summarizes the median runtimes in our sweep. As $|\cX|$ increases, the runtime rises substantially on this instance, while the dependence on $d$ remains comparatively modest for $d\le 10$.

\begin{table}[!t]
\centering
\caption{Empirical scaling at $\alpha=0.5$ (median runtime in seconds).}
\label{tab:scaling}
\scriptsize
\setlength{\tabcolsep}{4pt}
\begin{tabular}{c|cccc}
\hline
$|\cX|$ (fixed $d=6$) & 10 & 20 & 40 & 80 \\
runtime (s)           & 0.99 & 1.77 & 10.74 & 23.25 \\
\hline
$d$ (fixed $|\cX|=10$) & 4 & 6 & 8 & 10 \\
runtime (s)            & 0.96 & 0.66 & 1.06 & 0.91 \\
\hline
\end{tabular}
\end{table}

\section{Conclusion}
We cast the Petz--R\'enyi capacity computation for finite-alphabet c-q channels as minimizing a convex trace-power objective over the simplex and derive a Blahut--Arimoto--type entropic mirror-descent iteration with a closed-form exponentiated-gradient update and constant stepsize. Using divided-difference Hessian bounds, we prove entropy-relative smoothness and obtain a global sublinear rate for $\alpha\in(0,1)$, and further establish local linear convergence in KL divergence on a truncated simplex under a natural tangent-space nondegeneracy condition. Experiments on noncommuting random channels show the capacity trend across $\alpha$, verify convergence via a duality gap certificate, and demonstrate basic scaling with $|\cX|$ and $d$.

\bibliographystyle{IEEEtran}
\bibliography{reference, reference2}

\end{document}